\documentclass[11pt,letterpaper]{article}
\usepackage[tbtags]{amsmath}
\usepackage{amssymb,mathtools,fullpage,color,hyperref,xspace}
\usepackage[all]{hypcap}
\usepackage[amsthm,amsmath,thmmarks]{ntheorem}
\usepackage{graphicx}
\usepackage[margin=1in]{geometry}
\usepackage{array,multirow}

\definecolor{hyper}{RGB}{0,0,88}
\hypersetup{pdfborder={0 0 0},colorlinks=true,linkcolor=hyper,citecolor=hyper,filecolor=hyper,urlcolor=hyper}
\def\equationautorefname~#1\null{(#1)\null}

\newtheorem{theorem}{Theorem}
\newtheorem{corollary}{Corollary}
\newtheorem{lemma}{Lemma}
\newtheorem{claim}{Claim}
\newtheorem{proposition}{Proposition}

\newtheorem{observation}{Observation}

\newtheorem{definition}{Definition}

\newenvironment{mylist}[1]{\begin{itemize}\setlength{\itemsep}{#1pt}\setlength{\parsep}{0pt}\setlength{\parskip}{0pt}}{\end{itemize}}

\newcommand{\newclass}[2]{\newcommand{#1}{{\text{\normalfont\upshape\sffamily #2}}\xspace}}
\newcommand{\renewclass}[2]{\renewcommand{#1}{{\text{\normalfont\upshape\sffamily #2}}\xspace}}

\renewclass{\P}{P}
\newclass{\NP}{NP}
\newclass{\PSPACE}{PSPACE}
\newclass{\NL}{NL}
\renewclass{\L}{L}

\newclass{\LinearT}{Linear Time}

\usepackage{tikz}

 \usepackage[boxruled,linesnumbered, vlined]{algorithm2e}
 \usepackage{algorithmic}

\newcommand{\T}{\text{\normalfont\upshape T}\xspace}
\newcommand{\F}{\text{\normalfont\upshape F}\xspace}

\newcommand{\Fib}{\text{\normalfont\upshape Fib}\xspace}

\usepackage{mathabx}

\usepackage[shortlabels]{enumitem}
\setlist[enumerate]{nosep}

\usepackage{multirow}

\makeatletter
\tikzset{my loop/.style =  {to path={
  \pgfextra{}
  [looseness=5,min distance=5mm]
  \tikz@to@curve@path},font=\sffamily\small
  }}  
\makeatletter 

\renewcommand{\,}{\thinspace\hspace{0pt}}	
\renewcommand{\arraystretch}{1.2}

\SetArgSty{textnormal}


\begin{document}

\title{ Erd\H{o}s--Selfridge Theorem for Nonmonotone CNFs}
\author{Md Lutfar Rahman\qquad Thomas Watson\medskip\\\textit{University of Memphis}}
\maketitle

\begin{abstract}
In an influential paper, Erd\H{o}s and Selfridge introduced the Maker-Breaker game played on a hypergraph, or equivalently, on a monotone CNF. The players take turns assigning values to variables of their choosing, and Breaker's goal is to satisfy the CNF, while Maker's goal is to falsify it. The Erd\H{o}s--Selfridge Theorem says that the least number of clauses in any monotone CNF with $k$ literals per clause where Maker has a winning strategy is $\Theta(2^k)$.

We study the analogous question when the CNF is not necessarily monotone. We prove bounds of $\Theta(\sqrt{2}\,^k)$ when Maker plays last, and $\Omega(1.5^k)$ and $O(r^k)$ when Breaker plays last, where $r=(1+\sqrt{5})/2\approx 1.618$ is the golden ratio.
\end{abstract}

\section{Introduction}
In 1973, Erd\H{o}s and Selfridge published a paper \cite{ES} with several fundamental contributions, including:
\begin{mylist}{2}
\item Being widely regarded as the genesis of the method of conditional expectations. The subsequent impact of this method on theoretical computer science needs no explanation.
\item Introducing the so-called Maker-Breaker game, variants of which have since been studied in numerous papers in the combinatorics literature.
\end{mylist}
We revisit that seminal work and steer it in a new direction. The main theorem from \cite{ES} can be phrased in terms of CNFs (conjunctive normal form boolean formulas) that are monotone (they contain only positive literals). We investigate what happens for general CNFs, which may contain negative literals. We feel that the influence of Erd\H{o}s--Selfridge and the pervasiveness of CNFs in theoretical computer science justify this question as inherently worthy of attention. Our pursuit of the answer uncovers new techniques and invites the development of further techniques to achieve a full resolution in the future.

In the Maker-Breaker game played on a monotone CNF, the eponymous players take turns assigning boolean values to variables of their choosing. Breaker wins if the CNF gets satisfied, and Maker wins otherwise; there are no draws. Since the CNF is monotone, Breaker might as well assign $1$ to every variable she picks, and Maker might as well assign $0$ to every variable he picks. In the generalization to nonmonotone CNFs, each player can pick which remaining variable and which bit to assign it during their turn. To distinguish this general game, we rename Breaker as \T (for ``true'') and Maker as \F (for ``false''). The computational complexity of deciding which player has a winning strategy has been studied in \cite{Sch1,Sch2,Bys,Kutz1,Kutz2,AO,RW1,RW2,RW3}.

A CNF is \emph{$k$-uniform} when every clause has exactly $k$ literals (corresponding to $k$ distinct variables). The Erd\H{o}s--Selfridge Theorem answers an extremal question: How few clauses can there be in a $k$-uniform monotone CNF that Maker can win? It depends a little on which player gets the opening move: $2^k$ if Breaker plays first, and $2^{k-1}$ if Maker plays first. The identity of the player with the final move doesn't affect the answer for monotone CNFs. In contrast, ``who gets the last laugh'' matters a lot for general CNFs:

\begin{theorem}[informal] \label{thm:f-last-informal}
If \F plays last, then the least number of clauses in any $k$-uniform CNF where \F has a winning strategy is $\Theta(\sqrt{2}\,^k)$.
\end{theorem}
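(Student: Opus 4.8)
\textbf{Upper bound ($O(\sqrt2^{\,k})$).} The plan is to exhibit an explicit $k$-uniform CNF with about $2^{k/2}$ clauses on which \F wins moving last. Write $k=2d$ (for odd $k$ I would pad with one extra variable $c$, appending the literal $\bar c$ to every clause and letting \F open by setting $c:=1$). Take $2d$ variables grouped into pairs $\{a_1,b_1\},\dots,\{a_d,b_d\}$, and for each $s\in\{0,1\}^d$ include the clause $C_s=\bigvee_{i=1}^d\bigl(a_i^{s_i}\vee b_i^{s_i}\bigr)$, where $a_i^0=a_i$, $a_i^1=\bar a_i$, and similarly for $b_i$; this CNF has exactly $2^d=2^{k/2}$ clauses of $k$ literals each. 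The two key facts are: (a) $C_s$ is falsified precisely when $a_i=b_i=s_i$ for every $i$, so if \emph{every} pair ends up equal then some clause is dead; and (b) conversely, if even one pair is split ($a_i\ne b_i$) then every clause is already satisfied. Since $2d$ is even, \T moves first and \F moves last, so \F can use the pairing strategy — the moment \T touches one variable of a pair, \F sets the partner to the same value — which is a legal strategy and forces $a_i=b_i$ for all $i$ at the end; an easy induction on $d$ (strip off a resolved pair each round) shows the game cannot have ended in \T's favour beforehand. Hence \F wins with $\Theta(2^{k/2})$ clauses.

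\textbf{Lower bound ($\Omega(\sqrt2^{\,k})$).} The plan is to give \T a strategy that wins whenever the CNF has fewer than $c\cdot 2^{k/2}$ clauses, analysed through the potential $\Phi=\sum_{C\text{ unsatisfied}}2^{-\ell(C)/2}$, where $\ell(C)$ is the number of still-free variables of $C$; a satisfied clause drops out of the sum and a dead clause would contribute $2^{0}=1$, so $\Phi<1$ at all times certifies that no clause is dead. Initially $\Phi=m\cdot 2^{-k/2}<1$. Two basic facts drive the analysis: any single move multiplies the contribution of every clause by at most $\sqrt2$ (a hurt clause loses one free variable; a satisfied one leaves), so an \F-move satisfies $\Phi_{\mathrm{new}}\le\sqrt2\,\Phi_{\mathrm{old}}$; and \F can turn a clause into a "level-$1$" clause (one move from death) only out of a level-$2$ clause that contains the very variable \F plays. \T's strategy is: first neutralise the most urgent threat — satisfy any clause that \F has just pushed to level $1$ — and otherwise play the move that decreases $\Phi$ the most. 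One then shows inductively that \T can keep $\Phi$ safely below $1$ at every \F-turn (say below $1/\sqrt2$ right after each \T-move, so that after \F's at-most-$\sqrt2$ blow-up it is still below $1$), pairing the moves into \T-then-\F rounds — and this is exactly where "\F plays last" is used, since it forces \T always to be the one who prepares before each \F-move (an \F-first parity merely costs one extra $\sqrt2$ factor, absorbed into $c$).

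\textbf{Main obstacle.} The hard part is the last step: verifying that \T always \emph{has} a move restoring the invariant. Unlike the monotone Erd\H{o}s--Selfridge argument, where Breaker simply halves back Maker's damage, here a \T-move is two-sided — the value that rescues one nearly-dead clause may simultaneously drag another clause down a level — so one cannot just "undo" \F's move. The crux will be a careful case analysis showing that, while $\Phi$ stays below the critical threshold, these side effects are necessarily tame (for instance, at most one not-yet-safe clause can hinge on any single free variable, so \F's move can spawn at most one level-$1$ clause and \T's rescue cannot itself create a fresh unstoppable threat), and that the $\sqrt2$ in the potential base is exactly what balances \F's per-move growth against \T's per-move gain. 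I would expect that making this fully rigorous may require a slightly refined potential or a two-ply look-ahead rather than the bare $\Phi$ above, but the $\sqrt2^{\,k}$ scaling, matching the construction, should come out of it.
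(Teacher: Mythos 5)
Your upper bound is essentially the paper's construction: the formula $\bigvee_i(a_i\oplus b_i)$ expanded by distributivity into $2^{k/2}$ clauses, with \F using the pairing strategy to force every pair equal. (One small parity remark: appending $\bar c$ to every clause and having \F open with $c=1$ produces an instance of the \F-first, \F-last game; for the \T-first, \F-last game with odd $k$ the variable count must be even, and the paper instead takes two copies of the even-$k$ formula tagged with two fresh variables, paying a factor $2$. This only affects constants, not the $\Theta(\sqrt2\,^k)$ statement.)

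The lower bound, however, has a genuine gap, and you have correctly located it yourself: you never establish that \T can maintain the potential invariant, and the specific invariant you propose ("$\Phi<1/\sqrt2$ right after each \T-move, so \F's $\sqrt2$ blow-up keeps it below $1$") is not achievable. A single \T-move only affects clauses containing the one variable she plays, so if $\Phi$ is close to $1$ and spread over many variable-disjoint clauses, no move of \T can shrink $\Phi$ by a multiplicative $\sqrt2$. The right invariant is the weaker round-based one --- $\Phi$ does not increase over a full \T-then-\F round --- and the missing idea is how \T's choice controls \F's worst response \emph{jointly}. The paper's resolution: \T plays the literal $\ell_i$ maximizing $p(\ell_i)$, the total potential of clauses containing $\ell_i$. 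Partitioning the potential by membership of $\ell_i,\overline\ell_i,\ell_j,\overline\ell_j$ (with $a$ for clauses containing both $\ell_i,\ell_j$, $e$ for both $\overline\ell_i,\overline\ell_j$, $f,h$ for exactly one negated literal, etc.), the round's net change is $a+b+c+d+g-\bigl(e+(\sqrt2-1)(f+h)\bigr)$, and the maximality of $\ell_i$ gives the two inequalities $a+b+c\ge d+e+f$ and $a+b+c\ge b+e+h$, whose average yields $a+b+c\ge e+\frac12(f+h)\ge e+(\sqrt2-1)(f+h)$. No threat-level case analysis, refined potential, or two-ply look-ahead is needed; the entire argument is this one linear-algebraic step, which your proposal leaves open.
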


\begin{theorem}[informal] \label{thm:t-last-informal}
If \T plays last, then the least number of clauses in any $k$-uniform CNF where \F has a winning strategy is $\Omega(1.5^k)$ and $O(r^k)$ where $r=(1+\sqrt{5})/2\approx 1.618$.
\end{theorem}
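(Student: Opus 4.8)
The plan is to prove the two bounds by separate means: the lower bound $m=\Omega(1.5^k)$ via a potential-function survival strategy for \T, in the spirit of Erd\H{o}s--Selfridge but adapted to nonmonotone clauses and to the round structure forced by \T having the last move, and the upper bound $O(r^k)$ via a recursive construction of a $k$-uniform CNF on which \F wins whose clause count obeys a Fibonacci-type recurrence.

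\textbf{Lower bound.} Call a clause \emph{live} when none of its literals has yet been set true and at least one of its variables is unassigned, and give it weight $w_j=(2/3)^j$ where $j$ is its number of unassigned variables; let $\Phi$ be the sum of these weights, so $\Phi=m(2/3)^k$ at the start. The target is: if $m<(3/2)^k$ --- equivalently $\Phi<1$ initially --- then \T has a strategy keeping $\Phi<1$ at all times and, after each of its own moves, leaving no live clause of width $\le1$. These two conditions suffice, because \F wins only by falsifying some clause's last literal, which happens either on an \F-move (then a live width-$1$ clause was available, excluded by the second condition) or on a \T-move because \T was forced, which would require two live width-$1$ clauses sharing a variable with opposite signs, of combined weight $2w_1=\tfrac43>1$ (excluded by $\Phi<1$). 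The feature absent from the monotone game is that a move touches the clauses on its variable with \emph{either} sign: falsifying a literal of a width-$j$ clause multiplies its weight by $3/2$, i.e.\ adds $\tfrac12 w_j$ --- the ratio $3/2$ being picked precisely so this increment is half the weight --- whereas satisfying a literal deletes the clause. I would argue a round at a time. \F's move raises $\Phi$ by at most $\tfrac12 d$ where $d$ is the largest total live weight incident to a single variable; \T then finds a variable of nearly-as-large incident live weight all of whose live occurrences share a sign, assigns it to satisfy all those clauses, and thereby lowers $\Phi$ by enough to cancel \F's gain. To keep the second condition, \T maintains a stronger invariant after its moves (roughly: no live clause of width $\le1$, and no variable in two like-signed live width-$2$ clauses, so \F cannot next build a double threat \T cannot answer in one move). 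Reconciling this invariant --- including the collateral falsifications from \T's own response --- with the weight bound each round requires a delicate case analysis, and I expect this to be the main obstacle. The threshold that emerges is $(3/2)^k=1.5^k$.

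\textbf{Upper bound.} I would construct $k$-uniform CNFs $\Phi_k$ (padded with dummy variables so that \T moves last) on which \F wins, with $|\Phi_k|\le|\Phi_{k-1}|+|\Phi_{k-2}|$; since $r$ is the dominant root of $\lambda^2=\lambda+1$, this gives $|\Phi_k|=O(r^k)$. The device is a ``trigger'' variable $v$ together with two variable-disjoint (apart from $v$) blocks of clauses: the clauses $v\lor C$ with $C$ ranging over a fresh copy of $\Phi_{k-1}$, and the clauses $v\lor u\lor D$ with $D$ ranging over a fresh copy of $\Phi_{k-2}$, where $u$ is another new variable. \F opens by fixing $v$, which collapses the first block to a $(k-1)$-uniform copy of $\Phi_{k-1}$ in an \F-winning configuration and the second block to $(k-1)$-uniform clauses that will collapse to a copy of $\Phi_{k-2}$ after \F later fixes $u$; then \F wins by induction on whichever block \T declines to defend, \T gaining nothing by splitting its responses between the two disjoint sub-instances or by wasting a move on a dummy. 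Getting all the who-moves-last parities to line up across the recursion (which may require carrying along both an ``\F-to-move'' and a ``\T-to-move'' variant) and checking that \T truly has no escape from the residual sub-games is fiddly, but I regard it as bookkeeping rather than a genuine obstacle; the real difficulty stays the amortized round analysis of the lower bound.
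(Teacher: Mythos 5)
Both halves of your plan have genuine gaps, and the fatal issues are concrete. For the upper bound: in the game where \T has both the first and the last move, your construction puts the trigger variable $v$ \emph{positively} into every clause ($v\lor C$ and $v\lor u\lor D$), so \T simply opens with $v=1$ and satisfies the entire CNF; and your framing ``\F opens by fixing $v$'' contradicts the move order. The paper's construction avoids this by giving the shared variable \emph{opposite signs} in the two blocks --- the clauses are $(C\lor x_{2k-3})$ for $C$ in the $(k-1)$-level copy and $(C\lor\overline x_{2k-3}\lor x_{2k-2})$ for $C$ in the $(k-2)$-level copy --- so that recursively every variable except one dummy belongs to a pair, \F's strategy is purely reactive (assign $0$ to the partner of whatever \T plays, with a ``chase'' after \T plays the dummy), and whichever way the pair $\{x_{2k-3},x_{2k-2}\}$ resolves, exactly one of the two sub-CNFs is guaranteed to be falsified by induction. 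Your appeal to ``\T gains nothing by splitting responses or wasting a move on a dummy'' is exactly the kind of claim that fails for nonmonotone CNFs without such a pairing structure, since extra forced moves can hurt either player.

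For the lower bound: your potential $w_j=(2/3)^j$ matches the paper's, but the round analysis does not close. The step ``\T then finds a variable of nearly-as-large incident live weight all of whose live occurrences share a sign'' is unjustified --- every high-weight variable may occur with both signs in heavy clauses, so any assignment \T makes falsifies a lot of weight --- and this is precisely the nonmonotone difficulty. Your fallback invariant (no live width-$1$ clause after \T's moves, no variable in two like-signed live width-$2$ clauses) is a local threat condition that the weight bound does not enforce (two such width-$2$ clauses weigh only $8/9<1$) and that cascades to width $3$, $4$, \dots{} as \F builds threats one level up. The paper's missing idea is a global mechanism for exploiting the last move: \T repeatedly searches for a \emph{zugzwang pair} of literals $(\ell_i,\ell_j)$ satisfying $a+e\ge\frac{5}{4}(b+d)+\frac{1}{2}(c+f+g+h)$ (in the paper's $a,\dots,h$ notation), reserves the pair intending to set $\ell_i\ne\ell_j$ only after \F touches it, and charges the worst-case potential change at reservation time; when no such pair exists, the \emph{negation} of that inequality is combined linearly with \T's greedy choice (maximizing $p(\ell_i)-p(\overline\ell_i)$) to show a normal \T--\F round cannot increase the potential. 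Without this reservation-and-dichotomy argument, I do not see how your per-round accounting can be completed.
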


The most involved proof is the $\Omega(1.5^k)$ lower bound in \autoref{thm:t-last-informal}. We conjecture the correct bound is $\Theta(r^k)$.

\section{Results}

In the \emph{unordered CNF game}, there is a CNF $\varphi$ and a set of variables $X$ containing all variables that appear in $\varphi$ and possibly more. The players \T and \F alternate turns; each turn consists of picking an unassigned variable from $X$ and picking a value $0$ or $1$ to assign it.\footnote{This game is called ``unordered'' to contrast it with the related TQBF game, in which the variables must be played in a prescribed order.} The game ends when all variables are assigned; \T wins if $\varphi$ is satisfied (every clause has a true literal), and \F wins if $\varphi$ is unsatisfied (some clause has all false literals). There are four possible patterns according to ``who goes first'' and ``who goes last.'' If the same player has the first and last moves, then $|X|$ is odd, and if different players have the first and last moves, then $|X|$ is even.

\begin{definition}
For $k\ge 0$ and $a,b\in\{\T,\F\}$, we let $M_{k,a\cdots b}$ be the minimum number of clauses in $\varphi$, over all unordered CNF game instances $(\varphi,X)$ where $\varphi$ is $k$-uniform and \F has a winning strategy when player $a$ has the first move and player $b$ has the last move.
\end{definition}

\setcounter{theorem}{0}
\begin{theorem}[formal] \label{thm:f-last}
$M_{k,\T\cdots\F}=\sqrt{2}\,^k$ for even $k$, and $1.5\sqrt{2}\,^{k-1}\le M_{k,\T\cdots\F}\le\sqrt{2}\,^{k+1}$ for odd $k$.
\end{theorem}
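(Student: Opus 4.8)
The plan is to handle separately the upper bounds (exhibiting $k$-uniform CNFs on which \F wins when \T moves first) and the lower bounds (giving \T a winning strategy when $\varphi$ has too few clauses). For the upper bound with $k$ even I would use a pairing gadget: let $X$ consist of exactly $k$ variables split into $k/2$ disjoint pairs (so $|X|=k$ is even, giving \T the first move and \F the last), and for each of the $2^{k/2}=\sqrt{2}\,^k$ assignments $\alpha\colon X\to\{0,1\}$ that is constant on every pair, put into $\varphi$ the clause whose literal on a variable $x$ is $x$ if $\alpha(x)=0$ and $\bar x$ if $\alpha(x)=1$, so that $\alpha$ is its unique falsifying assignment. \F answers each move of \T on a variable $x$ by playing $x$'s partner with the same value; the invariant ``after each move of \T every pair is fully assigned or fully unassigned'' keeps this legal, so the final assignment is constant on every pair, equals some $\alpha$, and violates the corresponding clause. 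For odd $k$ I would use $(k-1)/2$ pairs, an extra variable $z$ that every clause mentions, and a dummy variable $e$ appearing in no clause (so $|X|=k+1$ is even); the clauses range over the two literals on $z$ together with all pair-constant assignments of the paired variables, which is $2\cdot 2^{(k-1)/2}=\sqrt{2}\,^{k+1}$ clauses, and \F plays the same answering strategy with $z$ paired to $e$, winning because there is a violated clause for every pair-constant assignment of the paired variables and every value of $z$.

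For the lower bound with $k$ even --- \T wins whenever $\varphi$ has fewer than $\sqrt{2}\,^k$ clauses --- I would run an Erd\H{o}s--Selfridge-style potential argument with the square-root weighting $\Phi=\sum_{C\ \mathrm{alive}}\sqrt{2}\,^{-u(C)}$, where a clause is \emph{alive} if none of its literals is yet true and $u(C)$ is its number of still-unassigned variables. Then $\Phi=m\cdot\sqrt{2}\,^{-k}<1$ initially, and \F can only have won when some alive clause reaches $u(C)=0$, which forces $\Phi\ge 1$; so it suffices for \T to keep $\Phi<1$. \T plays greedily, taking the move that minimizes the resulting $\Phi$; the crux is that this ``repairs'' the moves of \F, exploiting that \T moves first so \T can also pre-empt the variable \F most wants --- the inequality powering it being that defusing a variable $x$ (setting it to satisfy the heavier of its two sides) lowers $\Phi$ by $\max(P_x,N_x)-(\sqrt{2}-1)\min(P_x,N_x)$, which always dominates the most \F could gain anywhere, $(\sqrt{2}-1)\max(P_x,N_x)-\min(P_x,N_x)$, where $P_x$ and $N_x$ are the total weights of the alive clauses containing $+x$, respectively $-x$. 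The main obstacle is exactly this step: unlike in monotone Maker--Breaker, a move by \T also \emph{advances} the clauses on the side it does not kill, so one must verify that the new dangers thereby created --- tracked through how $P_y,N_y$ move for every other variable $y$ --- never swamp the guaranteed decrease, and getting the base $\sqrt{2}$ exactly right is where all the slack is spent. The endgame is then routine: having kept $\Phi<1$ down to \T's penultimate move, at most one clause is alive at the critical point, \T plays into it to satisfy it, and the final move of \F can violate nothing.

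For the lower bound with $k$ odd I would refine this. Now $|X|$ must be even, hence at least $k+1$, and $m<1.5\sqrt{2}\,^{k-1}$ only yields $\Phi<1.5/\sqrt{2}\approx 1.06$, which may exceed $1$, so ``keep $\Phi<1$'' is not by itself available, and I would couple the potential bookkeeping to a sharper endgame. With two variables $y,z$ remaining, the only configuration from which \F wins is two alive clauses, each of $u=2$ (hence each containing both $y$ and $z$), disagreeing on the sign of both $y$ and $z$: any single alive clause \T can satisfy, and two alive $u=2$ clauses sharing a sign on $y$ or on $z$ \T can kill in one move --- and a bad pair contributes exactly $1$ to $\Phi$. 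The odd-case argument is to show that for $m<1.5\sqrt{2}\,^{k-1}$, the first-move advantage of \T plus the non-increasing potential lets \T avoid ever reaching such a position, and $3/2$ is precisely the constant this endgame bookkeeping produces --- strictly above the naive $\sqrt{2}$. I expect the repair step and this odd-case endgame to be the two genuinely hard points; the matching even-$k$ constants then fall out, while for odd $k$ one obtains only the stated window $1.5\sqrt{2}\,^{k-1}\le M_{k,\T\cdots\F}\le\sqrt{2}\,^{k+1}$.
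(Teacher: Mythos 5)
Your upper bounds are correct and are essentially the paper's constructions: for even $k$ your clause set (one clause per pair-constant assignment) is exactly the CNF expansion of $(x_1\oplus x_2)\lor\cdots\lor(x_{k-1}\oplus x_k)$ with the same mirroring strategy for \F, and your odd-$k$ doubling (a literal on $z$ added to each copy, plus a dummy partner) matches the paper's two copies of $\varphi_{k-1}$ tagged with fresh variables. The gaps are in the two lower bounds, where in each case the decisive step is announced rather than proved.

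For the even-$k$ bound, you set up the right potential $p(C)=1/\sqrt{2}\,^{|C|}$ and a greedy strategy for \T, but the inequality you display compares \T's decrease with \F's best gain measured at the \emph{pre-move} weights, where it is trivial; as you yourself note, \F moves after \T, so \F's gain must be bounded with the weights after \T's move, when the clauses on the side \T did not satisfy have already been multiplied by $\sqrt{2}$ and may be concentrated on a single variable. You flag this as ``the main obstacle'' and say ``one must verify'' it, but that verification is the entire content of the bound. The paper resolves it by analyzing a full round at once: it partitions the potential according to occurrences of $\ell_i,\overline\ell_i,\ell_j,\overline\ell_j$ (the quantities $a,\dots,h$), has \T maximize $p(\psi,\ell_i)$ (not the one-move decrease), and deduces the round inequality from $p(\ell_i)\ge p(\overline\ell_i)$ and $p(\ell_i)\ge p(\overline\ell_j)$, namely $a+b+c\ge e+\tfrac12(f+h)\ge e+(\sqrt{2}-1)(f+h)$. (The paper notes that your max-decrease strategy also works but is trickier to analyze; either way, the computation is missing from your writeup.)

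For the odd-$k$ bound $1.5\sqrt{2}\,^{k-1}$ your route diverges from the paper's and is not carried out. Your endgame classification is not literally correct (with two variables left, a width-$1$ clause conflicting with a width-$2$ clause, or two width-$1$ clauses on distinct variables, are also losing for \T; they are ruled out only because their potential exceeds your bound $1.5/\sqrt{2}$), and, more importantly, you give no mechanism by which \T avoids arriving at the exact-potential-$1$ configuration of two conflicting width-$2$ clauses when the initial potential may be as large as $1.5/\sqrt{2}>1$ and is only guaranteed non-increasing; the assertion that ``$3/2$ is precisely the constant this endgame bookkeeping produces'' is never derived. The paper instead avoids any endgame case analysis by changing the potential to be parity-sensitive: $p(C)=1/\sqrt{2}\,^{|C|}$ for even $|C|$ and $p(C)=1/(1.5\sqrt{2}\,^{|C|-1})$ for odd $|C|$. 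This exploits exactly the slack $\tfrac12\ge\sqrt{2}-1$ left over in the even case (a clause shrinking by one literal gains a factor at most $3/2$), keeps the per-round non-increase with the same greedy strategy, makes the initial potential $<1$ precisely because $k$ is odd, and still assigns the empty clause potential $1$. Without either that device or an actual proof of your endgame-avoidance claim, the lower bound for odd $k$ is not established.
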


Let $\Fib_k$ denote the $k^\text{th}$ Fibonacci number. It is well-known that $\Fib_k=\Theta(r^k)$ where $r=(1+\sqrt{5})/2\approx 1.618$.

\begin{theorem}[formal] \label{thm:t-last}
$1.5^k\le M_{k,\T\cdots\T}\le\Fib_{k+2}$ for all $k$.
\end{theorem}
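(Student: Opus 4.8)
The plan is to establish the upper and lower bounds by separate arguments: a recursive construction for $M_{k,\T\cdots\T}\le\Fib_{k+2}$, and a potential-function strategy for \T for $M_{k,\T\cdots\T}\ge 1.5^k$.

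For the upper bound I would build, by induction on $k$, a $k$-uniform CNF $\varphi_k$ together with a variable set $X_k$ of odd size such that \F wins the game in which \T has both the first and the last move, with $\varphi_k$ having exactly $\Fib_{k+2}$ clauses; the Fibonacci recursion $\Fib_{k+2}=\Fib_{k+1}+\Fib_k$ is the engine. The base cases are $\varphi_0=\{()\}$, a single empty (already-false) clause with $\Fib_2=1$, and $\varphi_1=\{(x),(\bar x)\}$ with $X_1=\{x\}$, where \T's only (forced) move falsifies a clause and $\Fib_3=2$. For the inductive step I introduce a fresh variable $z$ and split the clause set into two groups, of sizes $\Fib_{k+1}$ and $\Fib_k$, on disjoint fresh variables, arranged so that once $z$ is assigned — to whichever value \T picks — one group becomes satisfied (or otherwise irrelevant) while the other collapses to a $(k{-}1)$-uniform gadget with \F now to move; spare variables are inserted to force $|X_k|$ odd. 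The analysis then checks that, no matter what \T plays first, \F is left in a position dominated by one of the two smaller gadgets and wins by induction. Making this precise requires carrying along an auxiliary family of ``\F-moves-first'' gadgets with a companion Fibonacci-type bound on their clause counts, and keeping track of which first/last configuration each subgame inherits and where the spare variables go; this bookkeeping is routine but must be done carefully.

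For the lower bound, define the potential $\Phi=\sum_C (2/3)^{u(C)}$, the sum taken over clauses $C$ not yet satisfied, where $u(C)$ is the number of still-unassigned variables occurring in $C$; a clause that has just been falsified has $u(C)=0$ and contributes $1$. A $k$-uniform CNF with fewer than $1.5^k$ clauses starts with $\Phi<1$, and I would have \T play greedily: on each \T-turn pick the variable and value that minimize $\Phi$. Writing $p_y,n_y$ for the $\Phi$-weights of the not-yet-satisfied clauses in which a variable $y$ occurs positively, resp.\ negatively, \T's best move through $y$ changes $\Phi$ by $-\max\!\big(p_y-\tfrac12 n_y,\;n_y-\tfrac12 p_y\big)\le-\tfrac14(p_y+n_y)\le 0$, while \F's best move through $x$ changes $\Phi$ by at most $\tfrac12\max(p_x,n_x)$; the base $2/3$ (equivalently $b=3/2$) is precisely what is supposed to make \T's and \F's per-move effects balance. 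The target invariant is that $\Phi<1$ after every \T-move. Since $|X|$ is odd the game both begins and ends with a \T-move, so if the invariant survives to the end then just before \T's final move at most one clause is still alive, that clause has its unique remaining literal on the last unassigned variable, and \T sets it true, winning. Crucially, \T's extra move — the opening one, which \F's adversarial choice of $X$ prevents \T from simply wasting, but which \T nevertheless spends on lowering $\Phi$ — is what provides the slack needed to absorb \F's gains over the ensuing alternation of \F-moves and \T-moves.

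The main obstacle is the round-by-round maintenance of the invariant, and I expect essentially all of the difficulty to lie there. Two phenomena defeat a naive ``$\Phi$ is nonincreasing over each (\F-move, \T-move) round'' claim. First, \F's move, while raising $\Phi$ by at most half the danger of the variable it plays, does so by \emph{advancing a whole batch of clauses at once}, and those clauses need not share a further common variable, so \T cannot always recoup the gain with a single reply; I would handle this by strengthening the invariant with a ``concentration'' statistic — e.g.\ a bound on $\sum_C u(C)(2/3)^{u(C)}$ or on the number of clauses with small $u(C)$ — and showing \T can control it too. Second, even when $\Phi<1$, a lone clause with $u(C)=1$ left standing after a \T-move can simply be falsified by \F on the next move; so, away from the very end of the game, \T must additionally guarantee never to hand \F such a clause, and the final $O(1)$ moves must be analyzed directly, using that $\Phi<1$ already forces at most one dangerous clause. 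Combining the potential bound, the concentration control, the ``no $u=1$ clause for \F'' refinement, the explicit endgame analysis, and the fact that \F controls $X$ (so \T genuinely has only the single extra tempo) is the delicate heart of the proof, which is presumably why the paper flags this as its most involved argument.
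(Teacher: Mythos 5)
Your lower bound argument has a genuine gap, and you essentially concede it yourself: the entire difficulty is the round-by-round maintenance of $\Phi<1$, and your proposal for handling it (a ``concentration'' statistic such as a bound on $\sum_C u(C)(2/3)^{u(C)}$, a ``no $u=1$ clause for \F'' refinement, and a direct endgame analysis) is speculative --- you give no argument that \T can actually control any such auxiliary quantity, and with base $3/2$ the naive greedy really does fail: after \T's move maximizing her local gain, \F can answer with a literal $\ell_j$ whose negated-side clauses blow up by the factor $9/4$, and no linear-in-$\Phi$ strengthening of the invariant is exhibited that recoups this. The paper's resolution is a different mechanism that your sketch does not contain: \T scans for a \emph{pair} of literals $(\ell_i,\ell_j)$ satisfying $a+e\ge\frac{5}{4}(b+d)+\frac{1}{2}(c+f+g+h)$ (in the paper's $a,\dots,h$ notation), preemptively rewrites the working CNF as if the constraint $(\ell_i\oplus\ell_j)$ were already enforced (deleting clauses containing $\ell_i\lor\ell_j$ or $\overline\ell_i\lor\overline\ell_j$ and shrinking the rest), sets those two variables aside, and only plays them reactively when \F touches them --- this is exactly where the last-move advantage is cashed in, since parity guarantees \T can always answer inside the set-aside pairs. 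The punchline is that when \emph{no} such pair exists, the negation of that inequality, taken together with \T's greedy choice maximizing $p(\psi,\ell_i)-p(\psi,\overline\ell_i)$, combines linearly (with coefficients $\tfrac{9}{16}$ and $\tfrac18$) to give precisely $a+b+c+d+g\ge\frac{5}{4}e+\frac{1}{2}(f+h)$, i.e.\ the potential is nonincreasing over a normal \T,\F round. Without this (or some equally concrete substitute), the invariant $\Phi<1$ is unproved and the bound $1.5^k$ does not follow.

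Your upper bound is also only a sketch, and it deviates from a construction that is known to work: you put the two Fibonacci groups on \emph{disjoint} fresh variables and appeal to an auxiliary ``\F-moves-first'' family, whereas the paper nests them on shared variables, $\varphi_k=\bigwedge_{C\in\varphi_{k-1}}(C\lor x_{2k-3})\wedge\bigwedge_{C\in\varphi_{k-2}}(C\lor\overline x_{2k-3}\lor x_{2k-2})$, and gives \F a single global pairing strategy on the pairs $\{x_{2i-1},x_{2i}\}$ with a ``chasing'' rule to absorb \T's spare tempo on the dummy variable $x_0$. The parity/tempo issue --- what \F does when \T plays the dummy variable in a game where \T has both the first and last move --- is exactly the point your sketch waves at with ``spare variables are inserted'' and ``routine bookkeeping,'' but it is not routine: it is where the chasing argument lives, and your disjoint-variable variant would still need an explicit \F strategy and induction hypothesis for the \F-first subgames. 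So neither direction of the theorem is established as written; the lower bound is the substantive missing idea.
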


\begin{observation} \label{obs:f-first}
$M_{k,\F\cdots b}=M_{k-1,\T\cdots b}$ for all $k\ge 1$ and $b\in\{\T,\F\}$.
\end{observation}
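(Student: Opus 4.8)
The plan is to establish the two inequalities $M_{k,\F\cdots b}\le M_{k-1,\T\cdots b}$ and $M_{k-1,\T\cdots b}\le M_{k,\F\cdots b}$ by explicit reductions between game instances. The guiding intuition is that when \F has the opening move, that move is essentially a free pass: \F should spend it on a brand-new dummy variable, after which \T is on the move and one literal has effectively been stripped from every clause, i.e., we face a $(k-1)$-uniform instance with \T to move first. The same picture, read in reverse, turns any instance that \F wins going first into a smaller instance that \T goes first in.

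For $M_{k,\F\cdots b}\le M_{k-1,\T\cdots b}$: take an optimal instance $(\psi,Y)$ with $\psi$ being $(k-1)$-uniform, \T first, $b$ last, \F winning, and $|\psi|=M_{k-1,\T\cdots b}$; introduce a fresh variable $z$ and let $\varphi$ be $\psi$ with the literal $\neg z$ appended to every clause, over $X=Y\cup\{z\}$. Then $\varphi$ is $k$-uniform with $|\psi|$ clauses, and \F's strategy ``open with $z\mapsto 1$, then follow the winning strategy for $(\psi,Y)$'' wins, because $\varphi|_{z\mapsto 1}=\psi$ and after \F's move it is \T's turn at the start of the $(\psi,Y)$ game. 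A parity check (adding one variable to $X$ flips who-moves-last, exactly cancelling the flip from ``\T first'' to ``\F first'') shows the last move of $(\varphi,X)$ is still by $b$. Hence $M_{k,\F\cdots b}\le|\varphi|=M_{k-1,\T\cdots b}$.

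For $M_{k-1,\T\cdots b}\le M_{k,\F\cdots b}$: take an optimal instance $(\varphi,X)$ with $\varphi$ being $k$-uniform, \F first, $b$ last, \F winning, $|\varphi|=M_{k,\F\cdots b}$; fix an \F-winning strategy and let $x_1\mapsto v_1$ be its first move. Build $\varphi'$ from the restriction $\varphi|_{x_1\mapsto v_1}$ by deleting every clause satisfied by that move, leaving untouched the surviving clauses that mentioned $x_1$ (which now have $k-1$ literals), and deleting one arbitrary literal from each surviving clause that did not mention $x_1$ (which still has $k$ literals). Then $\varphi'$ is $(k-1)$-uniform over $X\setminus\{x_1\}$ with at most $|\varphi|$ clauses. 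Since every clause of $\varphi'$ is a subclause of the corresponding clause of $\varphi|_{x_1\mapsto v_1}$, and deleting always-true clauses is harmless, \F's strategy for the position reached after $x_1\mapsto v_1$ also falsifies $\varphi'$; that position has \T to move, so it witnesses an \F win in $(\varphi',X\setminus\{x_1\})$ with \T first and (by the same parity bookkeeping) $b$ last. Minimality then gives $M_{k-1,\T\cdots b}\le|\varphi'|\le M_{k,\F\cdots b}$.

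Everything here is elementary, so I expect no real obstacle; the two points needing care are the last-player bookkeeping (which I would isolate as a one-line parity observation used in both directions) and the ``only helps \F'' monotonicity invoked to transfer strategies across the literal-deletion step---legitimate precisely because we only need \emph{some} small $(k-1)$-uniform instance that \F wins, not an exact game equivalence. I would also dispatch the degenerate case $k=1$ in a sentence, where $\varphi'$ becomes the empty clause and \F wins immediately, matching $M_{0,\T\cdots b}=1$.
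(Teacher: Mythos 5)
Your proposal is correct and follows essentially the same route as the paper: one direction pads every clause with a dummy literal that \F's opening move falsifies (the paper uses a positive $x_0$ played to $0$, you use $\overline z$ played to $1$, which is the same thing), and the other direction restricts along \F's winning opening move and deletes an arbitrary literal from the untouched clauses to restore uniformity, using the fact that shrinking clauses only helps \F. The extra parity bookkeeping and the $k=1$ remark you add are fine but not substantively different from the paper's argument.
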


\begin{proof}
$M_{k,\F\cdots b}\le M_{k-1,\T\cdots b}$: Suppose \F wins $(\varphi,X)$ when \T moves first, where $\varphi$ is $(k-1)$-uniform. Then \F wins $(\varphi',X\cup\{x_0\})$ when \F moves first, where $x_0$ is a fresh variable (not already in $X$) and $\varphi'$ is the same as $\varphi$ but with $x_0$ added to each clause. \F's winning strategy is to play $x_0=0$ first and then use the winning strategy for $(\varphi,X)$. Note that $\varphi'$ is $k$-uniform and has the same number of clauses as $\varphi$.

$M_{k-1,\T\cdots b}\le M_{k,\F\cdots b}$: Suppose \F wins $(\varphi,X)$ when \F moves first, where $\varphi$ is $k$-uniform. Say the opening move in \F's winning strategy is $\ell_i=1$, where $\ell_i\in\{x_i,\overline x_i\}$ is some literal. Obtain $\varphi'$ from $\varphi$ by removing each clause containing $\ell_i$, removing $\overline\ell_i$ from each clause containing $\overline\ell_i$, and removing an arbitrary literal from each clause containing neither $\ell_i$ nor $\overline\ell_i$. Then \F wins $(\varphi',X-\{x_i\})$ when \T moves first, and $\varphi'$ is $(k-1)$-uniform and has at most as many clauses as $\varphi$.
\end{proof}

\begin{corollary} \label{cor:f-first}
~\\[-18pt]
\begin{mylist}{2}
\item $M_{k,\F\cdots\F}=\sqrt{2}\,^{k-1}$ for odd $k$, and $1.5\sqrt{2}\,^{k-2} \le M_{k,\F\cdots\F}\le\sqrt{2}\,^k$ for even $k$.
\item $1.5^{k-1}\le M_{k,\F\cdots\T}\le\Fib_{k+1}$ for all $k$.
\end{mylist}
\end{corollary}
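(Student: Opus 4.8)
The plan is simply to substitute \autoref{thm:f-last} and \autoref{thm:t-last} into the reduction provided by \autoref{obs:f-first}, tracking parities and index shifts carefully. For the first item, apply \autoref{obs:f-first} with $b=\F$ to get $M_{k,\F\cdots\F}=M_{k-1,\T\cdots\F}$ for every $k\ge 1$. When $k$ is odd, $k-1$ is even, so \autoref{thm:f-last} yields $M_{k-1,\T\cdots\F}=\sqrt{2}\,^{k-1}$, giving the claimed equality. When $k$ is even (hence $k\ge 2$), $k-1$ is odd, so \autoref{thm:f-last} gives $1.5\sqrt{2}\,^{(k-1)-1}\le M_{k-1,\T\cdots\F}\le\sqrt{2}\,^{(k-1)+1}$, which rewrites as $1.5\sqrt{2}\,^{k-2}\le M_{k,\F\cdots\F}\le\sqrt{2}\,^{k}$, as desired.

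For the second item, apply \autoref{obs:f-first} with $b=\T$ to get $M_{k,\F\cdots\T}=M_{k-1,\T\cdots\T}$ for every $k\ge 1$, and then instantiate \autoref{thm:t-last} at $k-1$ to obtain $1.5^{k-1}\le M_{k-1,\T\cdots\T}\le\Fib_{(k-1)+2}=\Fib_{k+1}$. For completeness one can also check the degenerate base case $k=0$ directly (a $0$-uniform clause is empty, hence always false, so $M_{0,\F\cdots\T}=1$, and indeed $1.5^{-1}\le 1\le\Fib_1$), though the substantive content is the $k\ge 1$ case.

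There is no real obstacle here: this corollary is pure bookkeeping on top of results already established. The only points requiring a modicum of care are that the parity flips under $k\mapsto k-1$ (so an even $k$ in the corollary pulls from the odd-$k-1$ branch of \autoref{thm:f-last}, and vice versa), and that the Fibonacci index drops from $k+2$ to $k+1$ under the same shift. All the genuine difficulty lives in \autoref{thm:f-last}, \autoref{thm:t-last}, and \autoref{obs:f-first}.
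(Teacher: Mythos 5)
Your proposal is correct and matches the paper's (implicit) argument exactly: the corollary is obtained by combining \autoref{obs:f-first} with \autoref{thm:f-last} and \autoref{thm:t-last} at $k-1$, with the parity flip handled just as you describe, and the $k=0$ case noted separately since the observation needs $k\ge 1$. The only nitpick is that $k=0$ is also an instance of the even-$k$ case of the first item, so it deserves the same one-line check you gave for the second item ($1.5\sqrt{2}\,^{-2}=0.75\le 1\le\sqrt{2}\,^{0}=1$, using $M_{0,a\cdots b}=1$).
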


\noindent (\autoref{obs:f-first} requires $k\ge 1$, but the bounds in \autoref{cor:f-first} also hold for $k=0$ since $M_{0,a\cdots b}=1$.)

\section{Upper bounds}
In this section, we prove the upper bounds of \autoref{thm:f-last} and \autoref{thm:t-last} by giving examples of game instances with few clauses where \F wins. In \cite{ES}, Erd\H{o}s and Selfridge proved the upper bound for the Maker-Breaker game by showing a $k$-uniform monotone CNF with $2^k$ clauses where Maker (\F) wins.  The basic idea is that \F can win on the following formula, which is not a CNF: \[ (x_1 \land x_2) \lor (x_3 \land x_4) \lor \cdots \lor (x_{2k-1} \land x_{2k}) \] Whenever \T plays a variable, F responds by assigning $0$ to the paired variable. By the distributive law, this expands to a $k$-uniform monotone CNF with $2^k$ clauses. We study nonmonotone CNFs, which may have both positive and negative literals. 

\subsection{F plays last} \label{ssec:f-last}

\begin{lemma} \label{lem:UETF}
$M_{k,\T \cdots \F} \le \sqrt {2}\,^k$ for even $k$.
\end{lemma}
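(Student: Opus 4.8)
The plan is to exhibit a single game instance $(\varphi,X)$ achieving the bound. Take $X=\{x_1,\dots,x_k\}$, so that $|X|=k$ is even and, with \T moving first, \F moves last --- matching the $\T\cdots\F$ pattern. Group the variables into the $k/2$ pairs $\{x_1,x_2\},\{x_3,x_4\},\dots,\{x_{k-1},x_k\}$ and let \F play on the formula
\[\bigvee_{i=1}^{k/2}\bigl(x_{2i-1}\leftrightarrow x_{2i}\bigr)\;=\;\bigvee_{i=1}^{k/2}\bigl((x_{2i-1}\lor\overline{x_{2i}})\land(\overline{x_{2i-1}}\lor x_{2i})\bigr),\]
which is the nonmonotone counterpart of the Erd\H{o}s--Selfridge DNF recalled above: instead of OR-ing together conjunctions of size~$2$, we OR together $k/2$ ``equivalence gadgets,'' each of which is itself a $2$-clause CNF on its own disjoint pair of variables. \F wins exactly when this formula is false, i.e.\ when $x_{2i-1}\ne x_{2i}$ for every $i$.

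First I would handle the syntactic bookkeeping. Distributing the outer OR over the $k/2$ inner conjunctions yields one clause for each way of picking one of the two $2$-literal clauses from each pair; since distinct pairs use disjoint variables, every resulting clause has exactly $k$ literals (no repetition, no cancellation), and distinct choices yield distinct clauses. Hence the formula is a $k$-uniform CNF $\varphi$ with exactly $2^{k/2}=\sqrt{2}\,^k$ clauses, as required.

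Next I would give \F's strategy: the instant \T assigns some variable, \F assigns that variable's partner the opposite value. The invariant is that after each of \F's moves, every \emph{touched} pair (one with at least one assigned variable) is completely assigned with its two variables holding opposite values; a one-line induction shows that whenever it is \F's turn, the partner of \T's just-played variable is still free (all previously touched pairs being full, \T must have opened a fresh pair), so \F's response is always legal. After the final move --- made by \F --- every pair has $x_{2i-1}\ne x_{2i}$, so every disjunct is false, $\varphi$ is unsatisfied, and \F has won.

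I do not anticipate a genuine obstacle. The construction is essentially forced once one sees that \F's pairing response should \emph{flip} \T's bit rather than copy it: in the monotone proof each OR-ed conjunction buys one unit of clause width at the cost of a factor $2$ in the clause count, giving width $k$ and $2^k$ clauses, whereas the nonmonotone equivalence gadget $x_{2i-1}\leftrightarrow x_{2i}$ is a $2$-clause CNF that buys \emph{two} units of clause width at the same cost, so $k/2$ gadgets already give width $k$ with only $2^{k/2}$ clauses. The only points needing care are the clause count and width under distributivity (both following from variable-disjointness of the pairs) and the short invariant justifying \F's legality; the first/last-player requirement is dispatched simply by taking $|X|=k$.
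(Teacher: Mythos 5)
Your proof is correct and is essentially the paper's own construction: the paper uses XOR gadgets $(x_i\oplus x_{i+1})$ with \F equalizing each pair, while you use the complementary XNOR gadgets with \F making each pair unequal, which is the same argument up to negating one variable per pair; the distributivity bookkeeping, clause count $2^{k/2}=\sqrt{2}\,^k$, and pairing-strategy legality argument all match.
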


\begin{proof}
\F can win on the following formula, which is not a CNF, with variables $X_k=\{x_1,\ldots,x_k\}$.
\[ (x_1 \oplus x_2) \lor (x_3 \oplus x_4) \lor \cdots \lor (x_{k-1} \oplus x_k) \]
Whenever \T plays a variable, \F responds by playing the paired variable to make them equal.  To convert this formula to an equivalent CNF, first replace each $(x_i \oplus x_{i+1})$ with $(x_i \lor x_{i+1}) \land (\overline x_i \lor \overline x_{i+1})$. Then by the distributive law, this expands to a $k$-uniform CNF $\varphi_k$ where one clause is \[ ((x_1 \lor x_2) \lor (x_3 \lor x_4) \lor \cdots \lor (x_{k-1} \lor x_k)) \]
and for $i \in \{1,3,5,\ldots, k-1 \}$, each clause contains either $(x_i \lor x_{i+1})$ or $(\overline x_i \lor \overline x_{i+1})$. Therefore $\varphi_k$ has $2^{k/2} = \sqrt {2}\,^k$ clauses: one clause for each $S \subseteq \{1,3,5,\ldots,k-1 \}$. \F wins in $(\varphi_k,X_k)$.
\end{proof}

\begin{lemma} \label{lem:UOTF}
$M_{k,\T \cdots \F} \le \sqrt {2}\,^{k+1}$ for odd $k$.
\end{lemma}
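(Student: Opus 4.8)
The plan is to reduce the odd case to the even case by adding a single fresh variable, much as the even-case construction works. Concretely, given odd $k$, I would set $k'=k-1$ (even) and take the even-case formula $\varphi_{k'}$ on variables $x_1,\ldots,x_{k-1}$, which by \autoref{lem:UETF} is a $(k-1)$-uniform CNF with $\sqrt{2}\,^{k-1}$ clauses on which \F wins when \T plays last. Now introduce a fresh variable $x_k$ and replace the disjunction $(x_1\oplus x_2)\lor\cdots\lor(x_{k-2}\oplus x_{k-1})$ by the OR of that disjunction with a term involving $x_k$ — the cleanest choice being to append ``$\lor\ (x_k\oplus x_k)$'' conceptually, i.e.\ to just pad each clause of $\varphi_{k-1}$ with one extra literal. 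Since $\varphi_{k-1}$ has an even number $k-1$ of variables split into pairs, and we are adding one more variable, the parity of $|X|$ flips: if \T went first and last in the $(k-1)$-variable game, then in the $k$-variable game \F can no longer both respond to every move and keep the pairing intact. So the real content is arranging the strategy and the padding so that \F still wins.

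The key steps, in order, would be: (1) Define $\varphi_k$ on $X_k=\{x_1,\ldots,x_k\}$ by taking the even-case formula on $\{x_1,\ldots,x_{k-1}\}$ and adding the literal $x_k$ (or $\overline x_k$) to \emph{some} clauses and its negation to others, in a way that doubles the clause count from $\sqrt{2}\,^{k-1}$ to $\sqrt{2}\,^{k+1}$ — for instance, replace the single ``all-positive'' base disjunction structure by pairing $x_k$ with a dummy, i.e.\ effectively use the disjunction $(x_1\oplus x_2)\lor\cdots\lor(x_{k-2}\oplus x_{k-1})\lor x_k$ but note a lone literal $x_k$ would only add a factor of $1$, not $\sqrt2$; so instead I would pad to $k$ literals by reusing the standard Erd\H os--Selfridge trick of adding $x_k$ to each clause, giving exactly $\sqrt{2}\,^{k-1}\cdot 2=\sqrt{2}\,^{k+1}$ if we also branch on the sign of $x_k$. (2) Describe \F's strategy: since \T has the last move in the $k$-variable game means \F has the $first$ move (as $|X_k|=k$ is odd and same-player-first-and-last forces odd, different forces even — wait, \T first and \F last with $k$ odd is impossible; \T first and \T last needs odd, \T first and \F last needs even), so actually for odd $k$ the instance must have \F moving first. \F opens by playing $x_k$ to the value that kills the ``$+x_k$'' clauses, then plays the $x_1,\ldots,x_{k-1}$ pairing game as the responder, exactly as in \autoref{lem:UETF}. (3) Verify \F wins: after \F's opening move, the surviving clauses are precisely $\varphi_{k-1}$ restricted appropriately, a $(k-1)$-uniform CNF (the killed-literal clauses are gone, the $\overline{x_k}$-clauses lose that literal) on which \F as responder falsifies some clause. (4) Count: $\sqrt{2}\,^{k+1}$ clauses.

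I expect the main obstacle to be getting the clause count and the uniformity to line up simultaneously with a valid \F strategy. The naive ``add $x_k$ to every clause'' only preserves the clause count (giving $\sqrt{2}\,^{k-1}$, not $\sqrt{2}\,^{k+1}$), so to reach $\sqrt{2}\,^{k+1}$ one needs genuinely to branch on $x_k$'s sign, which means writing the base formula as $(x_1\oplus x_2)\lor\cdots\lor(x_{k-2}\oplus x_{k-1})\lor(x_k\oplus x_{k+1})$ with $x_{k+1}$ a fresh dummy \emph{in $X$ but not counted}, except then the formula is $(k+1)$-uniform, not $k$-uniform — so the correct move is to take the even construction for $k+1$ variables and contract it, or directly observe $M_{k,\T\cdots\F}\le M_{k+1,\T\cdots\F}$-type monotonicity does not obviously hold. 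The robust fix, which I would pursue, is: build $\varphi$ directly as the CNF expansion of $(x_1\oplus x_2)\lor\cdots\lor(x_{k-2}\oplus x_{k-1})\lor(x_k\oplus y)$ on variable set $X=\{x_1,\ldots,x_k,y\}$ with $|X|=k+1$ even so \T-first-\F-last is consistent, getting $2^{(k+1)/2}=\sqrt2\,^{k+1}$ clauses, each $(k+1)$-uniform — then delete one literal from each clause (the $y$ or $x_k$ literal, say always $y$) to make it $k$-uniform, checking that \F's pairing strategy still falsifies a clause because deleting $y$ from a clause only makes it easier to falsify. That monotone-simplification step is the crux and the only place a careful argument is needed; everything else is bookkeeping mirroring \autoref{lem:UETF}.
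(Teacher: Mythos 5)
Your final construction is correct and is essentially the paper's: the paper pads two copies of the even-case $\varphi_{k-1}$ with fresh variables $x_k$ and $x_{k+1}$, whereas you arrive at the near-identical CNF $\bigwedge_{C\in\varphi_{k-1}}(C\lor x_k)\land(C\lor\overline x_k)$ (keeping a dummy $y\in X$ so that $|X|=k+1$ is even) by deleting the $y$-literals from the even construction on $k+1$ variables, and the monotone-simplification step you flag is indeed sound because \F's pairing strategy falsifies an entire $(k+1)$-literal clause, whose $k$-literal subclause is then also false, giving $\sqrt{2}\,^{k+1}$ clauses as required. The exploratory detour in the middle (e.g., the claim that for odd $k$ the instance must have \F moving first) is mistaken, but your ``robust fix'' correctly discards it, so the final argument matches the paper's in substance.
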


\begin{proof}
Suppose $\varphi_{k-1}$ is the $(k-1)$-uniform CNF with $\sqrt {2}\,^{k-1}$ clauses from \autoref{lem:UETF} (since $k-1$ is even).  We take two copies of $\varphi_{k-1}$, and put a new variable $x_k$ in each clause of one copy, and a new variable $x_{k+1}$ in each clause of the other copy. Call this $\varphi_k$. Formally:
\begin{align*}
& \varphi_{k}  = \bigwedge \limits_{ C\in \varphi_{k-1}} (C \lor x_{k}  ) \land ( C\lor x_{k+1} )  \\
& X_{k}  = \{ x_1, x_2, \ldots, x_{k+1} \} 
\end{align*}
We argue \F wins in $(\varphi_{k}, X_k)$. If \T plays $x_k$ or $x_{k+1}$, \F responds by assigning $0$ to the other one. For other variables, \F follows his winning strategy for $(\varphi_{k-1}, X_{k-1})$ from \autoref{lem:UETF}.  Since $\varphi_{k-1}$ is a $(k-1)$-uniform CNF with $\sqrt {2}\,^{k-1}$ clauses, $\varphi_{k}$ is a $k$-uniform CNF with $2\sqrt {2}\,^{k-1} = \sqrt {2}\,^{k+1}$ clauses.
\end{proof}

\subsection{T plays last}

Before proving \autoref{lem:UTT} we draw an intuition. We already know that \F wins on 
\[ (x_1 \land x_2) \lor (x_3 \land x_4) \lor \cdots \lor (x_{2k-1} \land x_{2k}). \] 
Now replace each $(x_i \land x_{i+1})$ with $(x_i \land (\overline x_i \lor x_{i+1}))$, which is equivalent. This does not change the function expressed by the formula, so \F still wins this \T$\cdots$\F game. To turn it into a \T$\cdots$\T game, we can introduce a dummy variable $x_0$. Since the game is equivalent to a monotone game, neither player has any incentive to play $x_0$, so \F still wins this \T$\cdots$\T game.

If we convert it to a CNF, then by the distributive law it will again have $2^k$ clauses. But this CNF is not uniform---each clause has at least $k$ literals and at most $2k$ literals. We can do a similar construction that balances the CNF to make it uniform. This intuitively suggests that $\sqrt {2}\,^k < M_{k,\T \cdots \T} < 2^k$. 

\begin{lemma} \label{lem:UTT}
$M_{k,\T \cdots \T} \le\Fib_{k+2} $.
\end{lemma}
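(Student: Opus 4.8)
The plan is to construct, for every $k$, a $k$-uniform CNF $\varphi_k$ with exactly $\Fib_{k+2}$ clauses together with a variable set $X_k$ of odd size such that \F wins $(\varphi_k,X_k)$ with \T moving first (hence also last). I would build $\varphi_k$ by recursion, mirroring $\Fib_{k+2}=\Fib_{k+1}+\Fib_k$. Take $\varphi_0$ to be the single empty clause and $\varphi_1=(y)\wedge(\overline y)$ --- both \emph{unsatisfiable} --- and for $k\ge 2$ set
\[
 \varphi_k \;\equiv\; (x\vee\varphi_{k-1})\;\wedge\;(\overline x\vee w\vee\varphi_{k-2}),
\]
where $x,w$ are fresh and the two recursive instances are on disjoint variable sets. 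Concretely: adjoin the literal $x$ to every clause of a copy of $\varphi_{k-1}$, adjoin the literals $\overline x,w$ to every clause of a copy of $\varphi_{k-2}$, and take the union. Since $\varphi_{k-1}$ is $(k-1)$-uniform and $\varphi_{k-2}$ is $(k-2)$-uniform, every resulting clause has exactly $k$ literals, and the number of clauses is $\Fib_{k+1}+\Fib_k=\Fib_{k+2}$, as desired.

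The substance is \F's strategy, and the right way to organize it is to prove, by induction on $k$, the \emph{stronger} statement: \F wins $(\varphi_k,X)$ for \emph{every} $X$ that contains all variables of $\varphi_k$, under the rule that \T makes the last move of the game (equivalently, \T moves first iff $|X|$ is odd). This at once gives the lemma --- put $X_k$ to be the variable set $V(\varphi_k)$ of $\varphi_k$ if $|V(\varphi_k)|$ is odd, and $X_k=V(\varphi_k)\cup\{d\}$ for a fresh $d$ otherwise; either way $(\varphi_k,X_k)$ with ``\T last'' is a $\T\cdots\T$ game. The strengthening to arbitrary $X$ is what makes the induction go through: when \F ``collapses'' one half of $\varphi_k$ by fixing the pivot below, the variables of the other half, together with any padding variable $d$, all become irrelevant and behave as dummies, and the hypothesis absorbs them. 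One genuinely needs this: $|V(\varphi_k)|$ turns out to be even exactly when $k\equiv 0\pmod 3$, and a stray dummy variable can flip the winner of a CNF game in general, for instance on $(x\vee y)\wedge(\overline x\vee\overline y)$.

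For the inductive step I would use the identities $\varphi_k|_{x=0}\equiv\varphi_{k-1}$, $\varphi_k|_{x=1,w=1}\equiv 1$, and $\varphi_k|_{x=1,w=0}\equiv\varphi_{k-2}$, treating $x$ as a pivot. If \T ever sets $x=1$, \F answers $w=0$ and the game has collapsed to $\varphi_{k-2}$ (on a superset of its variables, still with \T to move last); if \T sets $x=0$, the game has collapsed to $\varphi_{k-1}$; and if \T plays inside one of the two recursive copies (or plays $w$, or a dummy) before $x$ is set, \F plays a pivot literal that either commits to, or renders irrelevant, the appropriate copy, always keeping herself from being forced to defend both conjuncts simultaneously. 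In every branch, once the pivot is resolved, \F has made no move inside the surviving copy and the remaining position is exactly of the form covered by the induction hypothesis (the detour has cost an even number of moves outside that copy, so ``\T last'' is inherited). The base cases $k=0,1$ are immediate since $\varphi_0,\varphi_1$ are unsatisfiable.

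The step I expect to be the main obstacle is precisely this case analysis before the pivot is fixed: one must show \F can always respond to a \T move inside a recursive copy without ceding a ``double threat'' --- \T should never be able to force \F to burn moves on $x$ and $w$ while building an unanswered head start in one of the copies --- and one must keep the turn-order accounting straight so that each collapsed position still has \T moving last. Settling on the strengthened induction hypothesis (quantifying over all admissible $X$ and phrasing the parity condition as ``\T moves last'' rather than fixing the opener) is the key device that tames the bookkeeping; the rest should be a careful but routine verification.
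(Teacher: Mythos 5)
Your construction is a disjoint-variables variant of the paper's (the paper also builds $\varphi_k$ from $\varphi_{k-1}$ and $\varphi_{k-2}$ in the shape $(x\vee\varphi_{k-1})\wedge(\overline x\vee w\vee\varphi_{k-2})$, but with the two copies \emph{sharing} variables), and your clause count and uniformity are fine. The gap is in the strategy, and it sits exactly where you flag it: the case where \T opens inside the $\varphi_{k-1}$ copy before the pivot is set. Your induction hypothesis only covers positions where the surviving copy is untouched, and no response resolving the pivot produces such a position. If \F answers $x=1$, the surviving conjunct is $w\vee\varphi_{k-2}$ and \T wins immediately with $w=1$. If \F answers $x=0$, the surviving position is $\varphi_{k-1}$ with \T already holding a free move inside it and \T to move again --- not covered by the IH, and genuinely losing: in $\varphi_3=(x_3\vee\varphi_2)\wedge(\overline{x_3}\vee w_3\vee\varphi_1)$ with $\varphi_2\equiv x\wedge w$, if \T opens with $x=1$ and \F replies $x_3=0$, then \T plays $w=1$ and satisfies $\varphi_2$, hence $\varphi_3$. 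The reply that actually works there is $w_3=0$, which resolves neither copy and leaves the position $(x_3\vee\varphi_2[x{=}1])\wedge(\overline{x_3}\vee\varphi_1)$, again outside your IH. So the ``careful but routine verification'' is not routine: the inductive framework you set up cannot close this case, and a strengthening of the IH to ``\F wins even after one free \T move'' is false, as the example shows.

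The paper avoids game-tree induction altogether. Its two recursive copies share variables, each level contributes one pair $\{x_{2k-3},x_{2k-2}\}$, and \F's strategy is a pure pairing strategy: guarantee that at least one variable of every pair ends up assigned $0$ (answer \T's move in a pair by zeroing its partner, and absorb the unpaired variable $x_0$ by opening a fresh pair and ``chasing'' until \T returns to it). Correctness then reduces to a static claim about the final assignment --- ``one zero per pair forces $\varphi_k$ false'' --- proved by induction on the formula with no reference to move order. The same kind of pairing condition also falsifies your disjoint-copies formula (one pair $\{x,w\}$ per internal node of the recursion tree, with the $\varphi_1$-leaves and the dummy left unpaired and handled by chasing), so your construction is salvageable, but the argument needed is this pairing argument rather than the pivot induction you propose.
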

\begin{proof} 
For every $k \in \{0,1,2, \ldots\}$ we recursively define a $k$-uniform CNF $\varphi_k$ on variables $X_k$, where $X_k = \{ x_0,x_1, \ldots, x_{2k-2} \}$ if $k>0$, and $X_0=\{ x_0\}$  (these $\varphi_k$, $X_{k}$ are different than in \autoref{ssec:f-last}):
\begin{mylist}{2}
\item $k=0$: $\varphi_0 = () $
\item $k=1$: $\varphi_1 = (x_0) \land (\overline x_0) $
\item $k>1$: $ \varphi_k = \bigwedge \limits_{ C \in \varphi_{k-1}} (C \lor x_{2k-3}) \land   \bigwedge \limits_{ C \in \varphi_{k-2}} (C \lor \overline x_{2k-3} \lor x_{2k-2} ) $
\end{mylist}
Now we argue \F wins in $(\varphi_k, X_{k})$. \F's strategy is to assign $0$ to at least one variable from each pair  $\{ x_1,x_2\}, \{x_3,x_4\}, \{x_5,x_6\}, \ldots, \{x_{2k-3}, x_{2k-2}\}$. Whenever \T plays from a pair, \F responds by assigning $0$ to the other variable. After \T plays $x_0$, \F picks a fresh pair $\{x_i,x_{i+1}\}$ where $i$ is odd and assigns one of them $0$, then ``chases'' \T until \T plays the other from $\{x_i,x_{i+1}\}$. Here the 	``chase'' means whenever \T plays from a fresh pair, \F responds by assigning $0$ to the other variable in that pair.  After \T returns to $\{x_i,x_{i+1}\}$, then \F picks another fresh pair to start another chase, and so on in phases.  We prove by induction on $k$ that this strategy ensures $\varphi_k$ is unsatisfied:
\begin{mylist}{2}
\item $k=0$: $\varphi_0$ is obviously unsatisfied.
\item $k=1$: $\varphi_1$ is obviously unsatisfied.
\item $k>1$:  By induction, both $\varphi_{k-1}$ and $\varphi_{k-2}$ are unsatisfied. Now $\varphi_k$ is unsatisfied since: By \F's strategy, at least one of $\{x_{2k-3}, x_{2k-2}\}$ is assigned $0$. If $x_{2k-3}=0$ then one of the clauses of $\varphi_k$ that came from $\varphi_{k-1}$ is unsatisfied. If $x_{2k-3}=1$ and $x_{2k-2}=0$ then one of the clauses of $\varphi_k$ that came from $\varphi_{k-2}$ is unsatisfied.
\end{mylist}
Letting $|\varphi_k|$ represent the number of clauses in $\varphi_k$, we argue $|\varphi_k| = \Fib_{k+2}$ by induction on $k$:
\begin{mylist}{2}
\item $k=0$: $|\varphi_0| = 1 = \Fib_2$.
\item $k=1$: $|\varphi_1| = 2 = \Fib_3$.
\item $k>1$:  By induction, $|\varphi_{k-1}| = \Fib_{k+1}$ and $|\varphi_{k-2}| = \Fib_{k}$.  So \[ |\varphi_k| = |\varphi_{k-1}| + |\varphi_{k-2}| = \Fib_{k+1} + \Fib_{k} = \Fib_{k+2}. \]
\end{mylist}
Therefore $M_{k,\T \cdots \T} \le\Fib_{k+2}$.
\end{proof}
\section{Lower bounds}
\subsection{Notation}
In the proofs, we will define a potential value $p(C)$ for each clause $C$. The value of $p(C)$ depends on the context. If $\varphi$ is a CNF (any set of clauses), then the potential of $\varphi$ is $p(\varphi) = \sum_{C \in \varphi } p(C)$. The potential of a literal $\ell_i$ with respect to $\varphi$ is defined as $p(\varphi, \ell_i) = p(\{C \in \varphi : \ell_i \in C\})$. When we have a particular $\varphi$ in mind, we can abbreviate $p(\varphi, \ell_i)$ as $p(\ell_i)$.

Suppose $\varphi$ is a CNF and $\ell_i, \ell_j$ are two literals. We define the potentials of different sets of clauses based on which of $\ell_i, \ell_j$, and their complements exist in the clause. For example, $a(\varphi, \ell_i, \ell_j)$ is the sum of the potentials of clauses in $\varphi$ that contain both $\ell_i, \ell_j$.
\begin{center}
	\setlength\tabcolsep{4pt}
{\renewcommand{\arraystretch}{1.3}
\begin{tabular}{r|>{\centering\arraybackslash}m{2.7cm}|>{\centering\arraybackslash}m{2.7cm}|>{\centering\arraybackslash}m{2.7cm}|}
\multicolumn{1}{r}{}&\multicolumn{1}{c}{$\ell_j$}&\multicolumn{1}{c}{$\overline\ell_j$}&\multicolumn{1}{c}{neither $\ell_j$ nor $\overline\ell_j$}\\
\cline{2-4}$\ell_i$&$a$&$b$&$c$\\
\cline{2-4}$\overline\ell_i$&$d$&$e$&$f$\\
\cline{2-4}neither $\ell_i$ nor $\overline\ell_i$&$g$&$h$&\\
\cline{2-4}
\end{tabular}}
\end{center}

\begin{align*}
a(\varphi, \ell_i, \ell_j)~&=~p(\{C \in \varphi : \ell_i \in C ~\text{and}  ~\ell_j \in C\}) \\
b(\varphi, \ell_i, \ell_j)~&=~p(\{C \in \varphi : \ell_i \in C ~\text{and} ~\overline \ell_j \in C\}) \\
c(\varphi, \ell_i, \ell_j)~&=~p(\{C \in \varphi : \ell_i \in C ~\text{and} ~ \ell_j \notin C ~\text{and} ~\overline \ell_j \notin C \}) \\
d(\varphi, \ell_i, \ell_j)~&=~p(\{C \in \varphi : \overline \ell_i \in C ~\text{and} ~ \ell_j \in C\}) \\
e(\varphi, \ell_i, \ell_j)~&=~p(\{C \in \varphi : \overline \ell_i \in C ~\text{and} ~ \overline \ell_j \in C\}) \\
f(\varphi, \ell_i, \ell_j)~&=~p(\{C \in \varphi : \overline \ell_i \in C ~\text{and} ~ \ell_j \notin C ~\text{and} ~\overline \ell_j \notin C \}) \\
g(\varphi, \ell_i, \ell_j)~&=~p(\{C \in \varphi : \ell_i \notin C ~\text{and} ~ \overline \ell_i \notin C ~\text{and} ~ \ell_j \in C\}) \\
 h(\varphi, \ell_i, \ell_j)~&=~p(\{C \in \varphi : \ell_i \notin C ~\text{and} ~ \overline \ell_i \notin C ~\text{and} ~\overline \ell_j \in C\}) 
\end{align*}
We can abbreviate these quantities as $a,b,c,d,e,f,g,h$ in contexts where we have particular $\varphi, \ell_i, \ell_j$ in mind. Also the following relations hold:
\begin{align*}
p(\ell_i)~&=~a+b+c \\
p(\overline \ell_i)~&=~d+e+f \\
p(\ell_j)~&= ~a+d+g \\
p(\overline \ell_j)~&=~b+e+h 
\end{align*}

When we assign $\ell_i=1$ (i.e., assign $x_i=1$ if $\ell_i$ is $x_i$, or assign $x_i=0$ if $\ell_i$ is $\overline x_i$), $\varphi$ becomes the \emph{residual} CNF denoted $\varphi[\ell_i=1]$ where all clauses containing $\ell_i$ get removed, and the literal $\overline\ell_i$ gets removed from remaining clauses.

\subsection{F plays last}

\begin{lemma} \label{lem: 2nTT}
$M_{k,\T \cdots \F} \ge \sqrt {2}\,^k$ for even $k$.
\end{lemma}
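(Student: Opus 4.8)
The plan is an Erd\H{o}s--Selfridge-style potential argument showing that whenever $\varphi$ is $k$-uniform with fewer than $\sqrt{2}\,^k$ clauses, \T (who moves first, with \F last) has a winning strategy, so \F cannot win; this yields $M_{k,\T\cdots\F}\ge\sqrt{2}\,^k$. Throughout the game I assign each unsatisfied clause $C$ of the residual CNF the potential $p(C)=\sqrt{2}\,^{-|C|}$, where $|C|$ is its number of unassigned variables, and $p(C)=0$ once $C$ is satisfied; let $\Phi=p(\varphi)$ be the total. Initially $\Phi=m\cdot\sqrt{2}\,^{-k}<1$ when $m<\sqrt{2}\,^k$, and \F wins only if some clause reaches $|C|=0$ while unsatisfied, which forces $\Phi\ge 1$; so it suffices to give \T a strategy keeping $\Phi<1$ at all times. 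I take \T's strategy to be greedy: assign $\ell_i=1$ for the literal $\ell_i$ on an unassigned variable that maximizes $p(\ell_i)-(\sqrt{2}-1)\,p(\overline{\ell_i})$ --- equivalently, the move minimizing $\Phi$ afterward. Two quick facts: for any variable, one of its two polarities has nonnegative such value, so the greedy value is $\ge 0$ and a \T-move never increases $\Phi$; and while $\varphi$ is unsatisfied the greedy literal actually occurs in $\varphi$ (a live clause supplies a literal $\ell$ with $p(\ell)>0$, and then its variable has greedy value $\ge(2-\sqrt2)p(\ell)>0$).

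Since \T moves first and \F moves last, the game is a sequence of consecutive $(\text{\T-move},\text{\F-move})$ rounds, and the crux is that each round leaves $\Phi$ non-increasing. Consider a round where \T plays $\ell_i$ greedily in $\varphi$ and \F then plays $\ell_j=1$ (necessarily on a different variable) in $\varphi'=\varphi[\ell_i=1]$. Passing from $\varphi$ to $\varphi'$ deletes the clauses containing $\ell_i$ and multiplies by $\sqrt{2}$ the potential of the clauses containing $\overline{\ell_i}$, so in the $a,\dots,h$ table for $(\varphi,\ell_i,\ell_j)$ we get $p(\varphi',\ell_j)=\sqrt{2}\,d+g$ and $p(\varphi',\overline{\ell_j})=\sqrt{2}\,e+h$. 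Writing out $\Phi_{\text{new}}-\Phi_{\text{old}}=\bigl[(\sqrt{2}-1)\,p(\varphi',\overline{\ell_j})-p(\varphi',\ell_j)\bigr]-\bigl[p(\ell_i)-(\sqrt{2}-1)\,p(\overline{\ell_i})\bigr]$, substituting, and using $(\sqrt{2}-1)(\sqrt{2}+1)=1$, this collapses to $-\mathcal{E}$ with
\[
\mathcal{E}\;=\;a+b+c+d+g-e-(\sqrt{2}-1)f-(\sqrt{2}-1)h ,
\]
so the round is non-increasing iff $\mathcal{E}\ge 0$.

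To certify $\mathcal{E}\ge 0$ I use that \T's greedy choice in particular beats playing $\overline{\ell_i}$ and beats playing $\overline{\ell_j}$: the former gives $p(\ell_i)\ge p(\overline{\ell_i})$, i.e.\ $\mathcal{E}_3:=a+b+c-d-e-f\ge 0$, and the latter gives $\mathcal{E}_2:=\sqrt{2}\,a+c+(\sqrt{2}-1)g-\sqrt{2}\,e-(\sqrt{2}-1)f-h\ge 0$. Then one checks the identity
\[
\mathcal{E}=(2-\sqrt{2})\mathcal{E}_2+(3-2\sqrt{2})\mathcal{E}_3+2(\sqrt{2}-1)b+(3\sqrt{2}-4)c+(4-2\sqrt{2})d+(5-3\sqrt{2})g+(3-2\sqrt{2})h ,
\]
in which every coefficient is positive, so $\mathcal{E}\ge 0$. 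Hence each round keeps $\Phi<1$; the individual \T-moves (which cannot increase $\Phi$) and \F-moves in between keep it there too; so $\Phi<1$ forever, no clause ever goes all-false, and \T wins. I expect the main obstacle to be locating this nonnegative combination --- equivalently, recognizing that $\sqrt{2}$ is forced: balancing the coefficient of $e$ (clauses shrunk by both $\overline{\ell_i}$ and $\overline{\ell_j}$) against that of $a$ makes feasibility of the combination require $\beta^{2}\le 2$ for the base $\beta$, which is exactly why a larger base would fail. A minor point to verify is the endgame and dummy-variable bookkeeping, but $\Phi<1$ conveniently forces at most one unsatisfied clause to survive once one variable remains, so \T is never caught choosing which of two clauses to rescue.
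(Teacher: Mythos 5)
Your proof is correct, and it follows the same Erd\H{o}s--Selfridge potential framework as the paper (same potential $p(C)=1/\sqrt{2}\,^{|C|}$, same round-by-round non-increase, same $a,\dots,h$ bookkeeping, same endgame conclusion from $p<1$), but it implements the alternative \T strategy that the paper only mentions in a footnote and does not analyze: you have \T maximize $p(\ell_i)-(\sqrt{2}-1)p(\overline{\ell_i})$ (the actual one-move drop in potential), whereas the paper's \T simply maximizes $p(\ell_i)$. The consequence is a different certification step: the paper gets $p(\ell_i)\ge p(\overline{\ell_i})$ and $p(\ell_i)\ge p(\overline{\ell_j})$ directly and finishes with a two-line averaging chain, while your greedy rule only yields the mixed inequalities $\mathcal{E}_3=a+b+c-d-e-f\ge 0$ and $\mathcal{E}_2=\sqrt{2}a+c+(\sqrt{2}-1)g-\sqrt{2}e-(\sqrt{2}-1)f-h\ge 0$, so you must exhibit an explicit nonnegative linear combination; I checked your identity coefficientwise (e.g.\ the $a$-coefficient $(2-\sqrt{2})\sqrt{2}+(3-2\sqrt{2})=1$, the $e$-coefficient $-(2-\sqrt{2})\sqrt{2}-(3-2\sqrt{2})=-1$, the $f$-coefficient $-(2-\sqrt{2})(\sqrt{2}-1)-(3-2\sqrt{2})=1-\sqrt{2}$) and all leftover coefficients $2(\sqrt{2}-1)$, $3\sqrt{2}-4$, $4-2\sqrt{2}$, $5-3\sqrt{2}$, $3-2\sqrt{2}$ are indeed positive, so $\mathcal{E}\ge 0$ and each \T\F round is non-increasing; your derivation $\Phi_{\mathrm{new}}-\Phi_{\mathrm{old}}=-\mathcal{E}$ via $p(\varphi',\ell_j)=\sqrt{2}d+g$, $p(\varphi',\overline{\ell_j})=\sqrt{2}e+h$ is also right. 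What your route buys is a more ``natural'' (locally optimal) strategy for \T and an argument that makes visible why the base $\sqrt{2}$ is the breaking point; what the paper's choice buys is a much shorter analysis, since ignoring the $\overline{\ell_i}$-clauses in the selection rule gives the two clean one-sided inequalities immediately. Two cosmetic remarks: your side facts (greedy value $\ge 0$, and the greedy literal lying in a live clause) are fine but the second is not actually needed, and your closing worry about \T ``rescuing'' a last clause is moot --- since $|X|$ is even the game ends exactly at the end of a round, where $p<1$ already rules out an empty unsatisfied clause.
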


\begin{proof}
Consider any $\T\cdots\F$ game instance $(\varphi,X)$ where $\varphi$ is a $k$-uniform CNF with $<\sqrt{2}\,^k$ clauses and $|X|$ is even. We show \T has a winning strategy. In this proof, we use $p(C)=1/\sqrt{2}\,^{|C|}$. A \emph{round} consists of a \T move followed by an \F move. 

\begin{claim} \label{clm:main}
In every round, there exists a move for \T such that for every response by \F, we have $p(\psi)\ge p(\psi')$ where $\psi$ is the residual CNF before the round and $\psi'$ is the residual CNF after the round.
\end{claim}

At the beginning we have $p(C)=1/\sqrt{2}\,^k$ for each clause $C\in\varphi$, so $p(\varphi)<\sqrt{2}\,^k/\sqrt{2}\,^k=1$. By \autoref{clm:main}, \T has a strategy guaranteeing that $p(\psi)\le p(\varphi)<1$ where $\psi$ is the residual CNF after all variables have been played. If this final $\psi$ contained a clause, the clause would be empty and have potential $1/\sqrt{2}\,^0=1$, which would imply $p(\psi)\ge 1$. Thus the final $\psi$ must have no clauses, which means $\varphi$ got satisfied and \T won.
This concludes the proof of \autoref{lem: 2nTT}, except for the proof of \autoref{clm:main}.
\end{proof}
\begin{proof}[Proof of \autoref{clm:main}]
Let $\psi$ be the residual CNF at the beginning of a round. \T picks a literal $\ell_i$ maximizing $p(\psi,\ell_i)$ and plays $\ell_i=1$.\footnote{It is perhaps counterintuitive that \T's strategy ignores the effect of clauses that contain $\overline\ell_i$, which increase in potential after playing $\ell_i=1$. A more intuitive strategy would be to pick a literal $\ell_i$ maximizing $p(\psi,\ell_i)-(\sqrt{2}-1)p(\psi,\overline\ell_i)$, which is the overall decrease in potential from playing $\ell_i=1$; this strategy also works but is trickier to analyze.} Suppose \F responds by playing $\ell_j=1$, and let $\psi'$ be the residual CNF after \F's move. Letting the $a,b,c,d,e,f,g,h$ notation be with respect to $\psi,\ell_i,\ell_j$, we have \[p(\psi)-p(\psi')=a+b+c+d+g- \bigl ( e+(\sqrt{2}-1)(f+h) \bigr ) \] because:
\begin{mylist}{2}
\item Clauses from the $a,b,c,d,g$ groups are satisfied and removed (since they contain $\ell_i=1$ or $\ell_j=1$ or both), so their potential gets multiplied by $0$.
\item Clauses from the $e$ group each shrink by two literals (since they contain $\overline\ell_i=0$ and $\overline\ell_j=0$), so their potential gets multiplied by $\sqrt{2}\cdot\sqrt{2}=2$.
\item Clauses from the $f,h$ groups each shrink by one literal, so their potential gets multiplied by $\sqrt{2}$.
\end{mylist}
By the choice of $\ell_i$, we have $p(\ell_i)\ge p(\overline\ell_i)$ and $p(\ell_i)\ge p(\overline\ell_j)$ with respect to $\psi$, in other words, $a+b+c\ge d+e+f$ and $a+b+c\ge b+e+h$. Thus $p(\psi)\ge p(\psi')$ because
\begin{align*}
a+b+c+d+g~\ge~a+b+c ~ \textstyle\ge~\frac{1}{2}(d+e+f)+\frac{1}{2}(b+e+h) & ~\textstyle\ge~e+\frac{1}{2}(f+h) \\
& ~\ge~e+(\sqrt{2}-1)(f+h).
\end{align*}
\end{proof}
Note: It did not matter whether $k$ is even or odd! \autoref{lem: 2nTT} is true for any $k$. \autoref{lem:tfodd} actually uses oddness of $k$. The main idea is to exploit the slack $1/2 \ge\sqrt{2}-1$ that appeared at the end of the proof of \autoref{clm:main}.

\begin{lemma} \label{lem:tfodd}
 $M_{k,\T \cdots \F} \ge 1.5\,\sqrt 2\,^{k-1}$ for odd $k$.
\end{lemma}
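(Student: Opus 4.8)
The plan is to re-run the argument of \autoref{lem: 2nTT} with a potential that depends on the \emph{parity} of the clause length, so as to cash in the slack $\tfrac12\ge\sqrt{2}-1$ observed at the end of the proof of \autoref{clm:main}. Fix a $\T\cdots\F$ instance $(\varphi,X)$ with $\varphi$ a $k$-uniform CNF, $k$ odd, $|X|$ even, and $n:=|\varphi|<1.5\,\sqrt{2}\,^{k-1}$; I want to show \T wins. I would take
\[ p(C)=1/\sqrt{2}\,^{|C|}\ \text{ when }|C|\text{ is even},\qquad p(C)=1/\bigl(1.5\,\sqrt{2}\,^{|C|-1}\bigr)=\tfrac{2\sqrt{2}}{3}/\sqrt{2}\,^{|C|}\ \text{ when }|C|\text{ is odd}. \]
Note $p(\varnothing)=1$, that an odd-length clause is weighted by the factor $\tfrac{2\sqrt{2}}{3}<1$ relative to the old potential, and that, since every clause of $\varphi$ has (odd) length $k$, we get $p(\varphi)=n/(1.5\,\sqrt{2}\,^{k-1})<1$.

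As in \autoref{lem: 2nTT}, it then suffices to prove the analog of \autoref{clm:main}: in every round \T has a move such that, whatever \F answers, the residual potential does not increase; chaining this over the game gives a final residual with potential $<1$, hence containing no (empty) clause, so $\varphi$ ends up satisfied and \T wins. \T again plays $\ell_i=1$ for a literal $\ell_i$ maximizing $p(\psi,\ell_i)$, which still yields $a+b+c\ge d+e+f$ and $a+b+c\ge b+e+h$ with respect to $\psi,\ell_i,\ell_j$ (this uses only that $\overline\ell_i,\overline\ell_j$ lie on unassigned variables, not the form of $p$). The one new wrinkle is the accounting for shrinking clauses: deleting one literal from a clause multiplies its potential by $\tfrac43$ if the clause had \emph{even} length and by $\tfrac32$ if it had \emph{odd} length (rather than uniformly by $\sqrt2$), while deleting two literals still multiplies it by $2$ in both cases. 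So I would split the ``shrink-by-one'' groups by parity of the clause length, $f=f_{\mathrm{ev}}+f_{\mathrm{od}}$ and $h=h_{\mathrm{ev}}+h_{\mathrm{od}}$, and get $p(\psi)-p(\psi')=(a+b+c+d+g)-e-\tfrac13(f_{\mathrm{ev}}+h_{\mathrm{ev}})-\tfrac12(f_{\mathrm{od}}+h_{\mathrm{od}})$.

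Averaging the two inequalities from \T's choice gives $a+b+c\ge e+\tfrac12(b+d+f+h)$, so
\[ p(\psi)-p(\psi')\ \ge\ \tfrac12 b+\tfrac32 d+g+\tfrac12(f+h)-\tfrac13(f_{\mathrm{ev}}+h_{\mathrm{ev}})-\tfrac12(f_{\mathrm{od}}+h_{\mathrm{od}})\ =\ \tfrac12 b+\tfrac32 d+g+\tfrac16(f_{\mathrm{ev}}+h_{\mathrm{ev}})\ \ge\ 0, \]
the point being that the $f_{\mathrm{od}},h_{\mathrm{od}}$ contributions cancel exactly and $\tfrac12-\tfrac13=\tfrac16>0$. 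That finishes the proof modulo routine verification. I expect the only delicate point to be calibrating the constant $\tfrac{2\sqrt2}{3}$: it must simultaneously make the displayed per-round inequality close \emph{and} force $p(\varphi)<1$ whenever $n<1.5\,\sqrt{2}\,^{k-1}$, and these two requirements pin the odd-length shrink factor to exactly $\tfrac32$. Thus the odd-length clauses in $f$ and $h$ get \emph{no} slack, and the crux is checking that their contributions cancel on the nose rather than merely being dominated; this is precisely where the hypothesis that $k$ is odd is used, since that is what gives every clause of $\varphi$ the smaller weight $1/(1.5\,\sqrt2\,^{k-1})$ (whereas for even $k$ one would only recover $n<\sqrt2\,^k$).
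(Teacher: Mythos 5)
Your proposal is correct and follows essentially the same route as the paper's proof: the identical parity-dependent potential, the same strategy of \T maximizing $p(\psi,\ell_i)$, and the same per-round potential accounting, with your splitting of $f,h$ by clause-length parity being only a slightly finer version of the paper's uniform bound of the shrink factor by $3/2$. The round-by-round inequality and the use of oddness of $k$ (only to get $p(\varphi)<1$ initially) match the paper's argument.
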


\begin{proof}
Consider any $\T\cdots\F$ game instance $(\varphi,X)$ where $\varphi$ is a $k$-uniform CNF with $<1.5\,\sqrt 2\,^{k-1}$ clauses and $|X|$ is even. We show \T has a winning strategy. In this proof, we use 
\begin{align*}
  p(C)=\begin{cases}
    1/\sqrt 2\, ^ {|C|} & \text{if $|C|$ is even}.\\
    1/1.5  \sqrt 2\, ^ {|C|-1} & \text{if $|C|$ is odd}.
  \end{cases}
\end{align*} 

\begin{claim} \label{clm:main2}
In every round, there exists a move for \T such that for every response by \F, we have $p(\psi)\ge p(\psi')$ where $\psi$ is the residual CNF before the round and $\psi'$ is the residual CNF after the round.
\end{claim}

At the beginning we have $p(C)= 1/1.5  \sqrt 2\, ^ {k-1}$ for each clause $C\in\varphi$ (since $|C|=k$, which is odd), so $p(\varphi)< 1.5\,\sqrt 2\,^{k-1}/1.5\,\sqrt 2\,^{k-1}=1$. By \autoref{clm:main2}, \T has a strategy guaranteeing that $p(\psi)\le p(\varphi)<1$ where $\psi$ is the residual CNF after all variables have been played. If this final $\psi$ contained a clause, the clause would be empty and have potential $1/\sqrt{2}\,^0=1$ (since $0$ is even), which would imply $p(\psi)\ge 1$. Thus the final $\psi$ must have no clauses, which means $\varphi$ got satisfied and \T won. This concludes the proof of \autoref{lem:tfodd}, except for the proof of \autoref{clm:main2}.
\end{proof}

\begin{proof}[Proof of \autoref{clm:main2}]
Let $\psi$ be the residual CNF at the beginning of a round. \T picks a literal $\ell_i$ maximizing $p(\psi,\ell_i)$ and plays $\ell_i=1$. Suppose \F responds by playing $\ell_j=1$, and let $\psi'$ be the residual CNF after \F's move. Letting the $a,b,c,d,e,f,g,h$ notation be with respect to $\psi,\ell_i,\ell_j$, we have \[p(\psi)-p(\psi') \ge a+b+c+d+g- \bigl ( e+\textstyle{\frac{1}{2}}(f+h) \bigr )\] because:
\begin{mylist}{2}
\item Clauses from the $a,b,c,d,g$ groups are satisfied and removed (since they contain $\ell_i=1$ or $\ell_j=1$ or both), so their potential gets multiplied by $0$.
\item Clauses from the $e$ group each shrink by two literals (since they contain $\overline\ell_i=0$ and $\overline\ell_j=0$). Here odd-width clauses remain odd and even-width clauses remain even, so their potential gets multiplied by $\sqrt{2}\cdot\sqrt{2}=2$.
\item Clauses from the $f,h$ groups each shrink by one literal.
There are two cases for a clause $C$ in these groups: 
\begin{mylist}{2}
\item $|C|$ is even, so $p(C) =  1/\sqrt 2  \, ^ {|C|}$. After $C$ being shrunk by 1, the new clause $C'$ has potential $ p(C') =  1/1.5  \sqrt 2 \, ^ {|C'|-1} = 1/1.5  \sqrt 2  \, ^ {|C|-2}$. So the potential of an even-width clause gets multiplied by $p(C')/p(C) = 4/3$. 
\item $|C|$ is odd, so $p(C) =  1/1.5 \sqrt 2  \, ^ {|C|-1}$. After $C$ being shrunk by 1, the new clause $C'$ has potential $ p(C') =  1/\sqrt 2  \, ^ {|C'|} =1/\sqrt 2  \, ^ {|C|-1} $. So the potential of an odd-width clause gets multiplied by  $p(C')/p(C) = 3/2$.
\end{mylist}
So their potential gets multiplied by $\le 3/2$ (since $4/3 \le 3/2$).
\end{mylist}
By the choice of $\ell_i$, we have $p(\ell_i)\ge p(\overline\ell_i)$ and $p(\ell_i)\ge p(\overline\ell_j)$ with respect to $\psi$, in other words, $a+b+c\ge d+e+f$ and $a+b+c\ge b+e+h$. Thus $p(\psi)\ge p(\psi')$ because
\[\textstyle a+b+c+d+g~\ge~a+b+c~\ge~\frac{1}{2}(d+e+f)+\frac{1}{2}(b+e+h)~\ge~e+\frac{1}{2}(f+h).\]
\end{proof}

\subsection{T plays last}
\begin{lemma} \label{lem: LTT}
$M_{k,\T \cdots \T} \ge 1.5\,^k$.
\end{lemma}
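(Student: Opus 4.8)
The plan is to prove this lower bound by a potential-function argument mirroring the structure of the proof of \autoref{lem: 2nTT}, but now adapted to the \T$\cdots$\T setting (so $|X|$ is odd, \T moves both first and last). We consider any $\T\cdots\T$ game instance $(\varphi,X)$ where $\varphi$ is $k$-uniform with strictly fewer than $1.5^k$ clauses, and we exhibit a winning strategy for \T. The natural choice of potential is $p(C)=1/1.5^{|C|}$, so that the initial potential is $p(\varphi)=|\varphi|/1.5^k<1$, and an empty clause would have potential $1/1.5^0=1$; hence if \T can keep the potential below $1$ at the end, $\varphi$ must be satisfied.

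The key difference from \autoref{lem: 2nTT} is the move structure. There \F plays last, so moves pair up into rounds (a \T move then an \F move), and \T only needs $p$ to be nonincreasing per round. Here \T plays last, so the game has the form \T, then (\F, \T) rounds, then a final lone \T move. The cleanest way to organize the argument is: \T's opening move reduces the potential (or at least doesn't increase it by too much), then each (\F, \T) round must be shown to not increase the potential, and the final \T move also doesn't increase it. For a lone \T move playing $\ell_i=1$: clauses containing $\ell_i$ vanish, clauses containing $\overline\ell_i$ shrink by one literal (potential multiplied by $1.5$), others unchanged; so the change is $p(\ell_i)-(1.5-1)p(\overline\ell_i)=p(\ell_i)-\tfrac12 p(\overline\ell_i)$, which \T makes $\ge 0$ by choosing $\ell_i$ to maximize $p(\psi,\ell_i)$ (this is $\ge p(\overline\ell_i)\ge\tfrac12 p(\overline\ell_i)$). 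The heart of the matter is the \F-then-\T round: after \F plays $\ell_j=1$, \T should respond with a literal $\ell_i$ (chosen with knowledge of \F's move) so that the net potential change over the round is nonnegative. Using the $a,\dots,h$ table with respect to $\psi,\ell_i,\ell_j$ (where $\psi$ is the CNF before \F's move), the $a,b,c,d,g$ groups get killed, the $e$ group shrinks by two literals (multiplied by $1.5^2=2.25$), and the $f,h$ groups shrink by one literal (multiplied by $1.5$), giving $p(\psi)-p(\psi')=a+b+c+d+g-\bigl(1.25\,e+\tfrac12(f+h)\bigr)$. \T wants to choose $\ell_i$ (depending on $\ell_j$) making this nonnegative.

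The main obstacle I expect is precisely this round analysis: after \F has committed to $\ell_j=1$, the residual CNF is $\psi[\ell_j=1]$, and \T needs a good literal in it. One clean approach: let $\chi=\psi[\ell_j=1]$; playing $\ell_j=1$ changes potential by $p(\psi,\ell_j)-\tfrac12 p(\psi,\overline\ell_j)=(a+d+g)-\tfrac12(b+e+h)$, which could be negative (\F is adversarial). \T then picks $\ell_i$ maximizing $p(\chi,\ell_i)$ and by the single-move bound recovers $p(\chi,\ell_i)-\tfrac12 p(\chi,\overline\ell_i)\ge\tfrac12 p(\chi,\ell_i)$. The task is to show $\tfrac12 p(\chi,\ell_i)$ compensates for whatever \F gained, i.e. that there is always a literal in $\chi$ with large enough potential relative to $p(\chi)$ — here is where the constant $1.5$ (as opposed to something closer to the golden ratio $r\approx1.618$) must be doing the work, and where the analysis is most delicate. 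One would need to track how the potential of the residual CNF relates to the potential before the round, using $p(\chi)=p(\psi)-(a+d+g)+\tfrac12(b+e+h)$ and the fact that in a $k$-uniform-derived CNF every clause is nonempty until the end, so $p(\chi,\ell_i)$ for the best $\ell_i$ is bounded below by (something like) $p(\chi)$ divided by the max clause width or by a counting argument over which literal appears most. I would expect the final inequality to hinge on the same kind of "slack" exploited in \autoref{lem:tfodd} — that $1.5^2=2.25$ is comfortably less than the $1.5\times 1.5$ one might fear from two independent shrink steps — combined with a careful choice of $\ell_i$ that also controls the $e$-group (clauses containing both $\overline\ell_i$ and $\overline\ell_j$), which is the term that grows fastest. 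Verifying that this can always be done, possibly by having \T's strategy look one move ahead or by a more global accounting over a whole "phase" rather than a single round, is the crux and the part most likely to require a genuinely new idea beyond the $\T\cdots\F$ proofs.
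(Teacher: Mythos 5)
Your setup coincides with the paper's (the potential $p(C)=1/1.5^{|C|}$, the observation that $p(\varphi)<1$ initially and that an empty clause would force $p\ge 1$, so keeping the potential from increasing wins for \T), but the core of the argument is absent, and you acknowledge this yourself: you never establish that \T can actually prevent the potential from increasing across the rounds in which \F moves. A purely greedy per-round analysis cannot close with base $1.5$, no matter how you pair the moves. If you pair them as \T-then-\F (as the paper does), the $e$-group gets multiplied by $1.5^2=9/4$ and the $f,h$-groups by $3/2$, and \T's greedy choice alone only gives $a+b+c+d+g\ge e+\frac{1}{2}(f+h)$ (the computation from \autoref{clm:main2}), which does not absorb the extra $\frac{1}{4}e$; with $\sqrt2$ this slack existed, with $1.5$ it does not. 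In your pairing as \F-then-\T, \F moves first and can raise the potential by up to $\frac{1}{2}p(\psi,\overline\ell_j)-p(\psi,\ell_j)$, and there is no general guarantee that a single responding literal in the residual CNF carries enough potential to compensate; your sketch ("divide by the max clause width or a counting argument") does not yield the needed inequality, and you explicitly flag this step as requiring "a genuinely new idea."

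That new idea is precisely what the paper supplies, and it is where the last-move advantage is really used: a deferral ("zugzwang") mechanism, not a sharper greedy response. Before each normal move, \T checks whether some pair of literals $(\ell_i,\ell_j)$ satisfies $a+e\ge\frac{5}{4}(b+d)+\frac{1}{2}(c+f+g+h)$. If so, \T does not play it; she commits to enforcing $\ell_i\ne\ell_j$ whenever \F eventually touches $\{x_i,x_j\}$ (she can afford to always be the responder on these reserved pairs because she has the last move), and she preemptively rewrites her working CNF $\psi$ --- deleting clauses containing $\ell_i\lor\ell_j$ or $\overline\ell_i\lor\overline\ell_j$ and stripping these four literals from the rest --- which by the displayed inequality does not increase $p(\psi)$ (\autoref{fact:zug}). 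When no such pair exists, the reverse inequality holds for \emph{every} pair, in particular for the pair actually played in the next \T-then-\F exchange, and a linear combination of that inequality with the one coming from \T's choice of $\ell_i$ (maximizing $p(\psi,\ell_i)-p(\psi,\overline\ell_i)$) yields exactly $a+b+c+d+g\ge\frac{5}{4}e+\frac{1}{2}(f+h)$ (\autoref{fact:zug2}). Making this a legitimate strategy also requires bookkeeping your sketch has no counterpart for: the constraint set $\zeta$ of parity constraints and the reserved set $Z$ (of even size), the soundness invariant that satisfying $\psi\wedge\zeta$ satisfies $\varphi$, and the parity/endgame argument showing \T's final lone move finishes off a $\psi$ of at most one width-$\le 1$ clause. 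Without this machinery, or some substitute for it, your proposal stops exactly at the crux of the lemma.
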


\begin{proof}
Consider any $\T\cdots\T$ game instance $(\varphi,X)$ where $\varphi$ is a $k$-uniform CNF with $<1.5^k$ clauses and $|X|$ is odd. We show \T has a winning strategy. In this proof, we use $p(C)=1/1.5^{|C|}$.

For intuition, how can \T take advantage of having the last move? She will look out for certain pairs of literals to ``set aside'' and wait for \F to assign one of them, and then respond by assigning the other one the opposite value. We call such a pair ``zugzwang,'' which means a situation where \F's obligation to make a move is a disadvantage for \F. Upon finding such a pair, \T anticipates that certain clauses will get satisfied later, but other clauses containing those literals might shrink when the zugzwang pair eventually gets played. Thus \T can update the CNF to pretend those events have already transpired. The normal gameplay of $\T\F$ rounds (\T plays, then \F plays) will sometimes get interrupted by $\F\T$ rounds of playing previously-designated zugzwang pairs. We define the zugzwang condition so that \T's modifications won't increase the potential of the CNF (which is no longer simply a residual version of $\varphi$). When there are no remaining zugzwang pairs to set aside, we can exploit this fact---together with \T's choice of ``best'' literal for her normal move---to analyze the potential change in a $\T\F$ round. This allows the proof to handle a smaller potential function and hence more initial clauses, compared to when \F had the last move.

\LinesNumberedHidden{
\begin{algorithm}[p] 
\caption{\T's winning strategy in $(\varphi,X)$} \label{alg:tt}
\DontPrintSemicolon
\BlankLine
 \SetKwProg{Fn}{subroutine}{:}{}

initialize ~$\psi \leftarrow \varphi$;~~$Y \leftarrow X$;~~$\zeta \leftarrow \{\}$;~~$Z \leftarrow \{\}$ \;

\While{game is not over}{

	\While{{\normalfont{\texttt{FindZugzwang()}}} returns a pair ($\ell_i,\ell_j$)}{
		\texttt{TfoundZugzwang($\ell_i,\ell_j$)}
	}
	\texttt{TplayNormal()}\;
	\While{\F picks $x_k \in Z$ and $\ell_k \in \{x_k, \overline x_k\}$ and assigns $\ell_k=1$}{
		\texttt{TplayZugzwang($\ell_k$)}
	}
	\lIf{$|Y \cup Z|=0$}{halt}
	\texttt{FplayNormal()}
}
 \;

 \SetKwFunction{FindZugzwang}{FindZugzwang}
  \Fn{\FindZugzwang{}}{
       \lIf {there exist distinct $x_i,x_j\in Y$ and $\ell_i \in \{x_i,\overline x_i\}$ and $\ell_j\in\{x_j,\overline x_j\}$ such that (with respect to $\psi, \ell_i, \ell_j$): $\textstyle a+e\ge\frac{5}{4}(b+d)+\frac{1}{2}(c+f+g+h)$}{\KwRet $(\ell_i, \ell_j)$} 
      \KwRet  NULL \;
  }
\;
 \SetKwFunction{TfoundZugzwang}{TfoundZugzwang}
  \Fn{\TfoundZugzwang{$\ell_i,\ell_j$}}{
     /* \T modifies $\psi$ with the intention to make $\ell_i \ne \ell_j$ by waiting for \F to touch $\{x_i,x_j\}$ */ \;
      $\zeta\leftarrow\zeta\cup\{(\ell_i\oplus\ell_j)\}$;~~$Z \leftarrow Z \cup \{ x_i, x_j\}$;~~$Y \leftarrow Y - \{ x_i, x_j\}$  \;
       remove from $\psi$ every clause containing $\ell_i \lor \ell_j$ or containing $\overline \ell_i \lor \overline \ell_j$\;
       remove $\ell_i, \overline \ell_i,  \ell_j, \overline \ell_j$ from all other clauses of $\psi$
  }
\;
 \SetKwFunction{TplayZugzwang}{TplayZugzwang}
  \Fn{\TplayZugzwang{$\ell_k$}}{
  /* \T makes $\ell_m \ne \ell_k$ */ \;
\T picks $x_m\in Z$ and $\ell_m\in\{x_m,\overline x_m\}$ such that $(\ell_k\oplus\ell_m)\in\zeta$ and assigns $\ell_m=0$\;
   $\zeta \leftarrow \zeta  - \{(\ell_k \oplus \ell_m)\} $;~~$Z \leftarrow Z - \{ x_k, x_m\}$  
  }
\;
 \SetKwFunction{Tplay}{TplayNormal}
  \Fn{\Tplay{}}{
\T picks $x_i\in Y$ and $\ell_i\in\{x_i,\overline x_i\}$ maximizing $p(\psi,\ell_i)-p(\psi,\overline\ell_i)$ and assigns $\ell_i=1$\;
$\psi\leftarrow\psi[\ell_i=1]$;~~$Y\leftarrow Y-\{x_i\}$
  }
\;
\SetKwFunction{Fplay}{FplayNormal}
  \Fn{\Fplay{}}{
\F picks $x_j\in Y$ and $\ell_j\in\{x_j,\overline x_j\}$ and assigns $\ell_j=1$\;
$\psi\leftarrow\psi[\ell_j=1]$;~~$Y\leftarrow Y-\{x_j\}$; 
  }

\end{algorithm}
}

We describe \T's winning strategy in $(\varphi,X)$ as \autoref{alg:tt}. In the first line, the algorithm declares and initializes $\psi, Y, \zeta, Z$, which are accessed globally. Here $\psi$ is a CNF (initially the same as $\varphi$), and $\zeta$ is a set (conjunction) of constraints of the form $(\ell_i\oplus\ell_j)$. We consider $(\ell_i\oplus\ell_j)$, $(\ell_j\oplus\ell_i)$, $(\overline\ell_i\oplus\overline\ell_j)$, $(\overline\ell_j\oplus\overline\ell_i)$ to be the same constraint as each other.
The algorithm maintains the following three invariants:
\begin{mylist}{2}
\item[(1)] $Y$ and $Z$ are disjoint subsets of $X$, and $Y\cup Z$ is the set of unplayed variables, and $Y$ contains all variables that appear in $\psi$, and $Z$ is exactly the set of variables that appear in $\zeta$, and $|Z|$ is even.
\item[(2)] For every assignment to $Y \cup Z$, if $\psi$ and $\zeta$ are satisfied, then $\varphi$ is also satisfied by the same assignment together with the assignment played by \T and \F so far to the other variables of $X$.
\item[(3)] $p(\psi) < 1$.
\end{mylist}
Now we argue how these invariants are maintained at the end of the outer loop in \autoref{alg:tt}. Invariant (1) is straightforward to see.

\begin{claim}
 Invariant $(2)$ is maintained.
\end{claim}
\begin{proof}
Invariant (2) trivially holds at the beginning.

Each iteration of the first inner loop maintains (2): Say $\psi$ and $\zeta$ are at the beginning of the iteration, and $\psi'$ and $\zeta'$ denote the formulas after the iteration. Assume (2) holds for $\psi$ and $\zeta$. To see that (2) holds for $\psi'$ and $\zeta'$, consider any assignment to the unplayed variables. We will argue that if $\psi'$ and $\zeta'$ are satisfied, then $\psi$ and $\zeta$ are satisfied, which implies (by assumption) that $\varphi$ is satisfied. So suppose $\psi'$ and $\zeta'$ are satisfied. Then $\psi$ is satisfied because each clause containing $\ell_i \lor \ell_j$ or containing $\overline \ell_i \lor \overline \ell_j$ is satisfied due to $(\ell_i \oplus \ell_j)$ being satisfied in $\zeta'$, and each other clause is satisfied since it contains the corresponding clause in $\psi'$ which is satisfied. Also, $\zeta$ is satisfied since each of its constraints is also in $\zeta'$ which is satisfied.

It is immediate that \T's and \F's ``normal'' moves in the outer loop maintain (2), because of the way we update $\psi$ and $Y$.

Each iteration of the second inner loop maintains (2): If an assignment satisfies $\psi'$ and $\zeta'$ (after the iteration) then it also satisfies $\psi$ and $\zeta$ (at the beginning of the iteration) since \T's move satisfies $(\ell_k\oplus\ell_m)$---and therefore the assignment satisfies $\varphi$.
\end{proof}

\begin{claim} \label{clm:i3}
 Invariant $(3)$ is maintained.
\end{claim}

\begin{proof}
Invariant (3) holds at the beginning by the assumption that $\varphi$ has $<1.5^k$ clauses (and each clause has potential $1/1.5^{k}$).

The first inner loop maintains (3) by the following proposition, which we prove later.

\begin{proposition} \label{fact:zug}
If {\normalfont{\texttt{FindZugzwang()}}} returns $(\ell_i, \ell_j)$, then $p(\psi)\ge p(\psi')$ where $\psi$ and $\psi'$ are the CNFs before and after the execution of  {\normalfont{\texttt{TfoundZugzwang()}}}.
\end{proposition}

The second inner loop does not affect (3). In each outer iteration except the last, \T's and \F's moves from $Y$ maintain (3) by the following proposition, which we prove later.

\begin{proposition} \label{fact:zug2}
If {\normalfont{\texttt{FindZugzwang()}}} returns {\normalfont NULL}, then $p(\psi)\ge p(\psi')$ where $\psi$ is the CNF before {\normalfont{\texttt{TplayNormal()}}} and $\psi'$ is the CNF after {\normalfont{\texttt{FplayNormal()}}}.
\end{proposition}

This concludes the proof of \autoref{clm:i3}.
\end{proof}

Now we argue why \T wins in the last outer iteration. Right before {\normalfont{\texttt{TplayNormal()}}}, $|Y|$ must be odd by invariant (1), because an even number of variables have been played so far (since \T has the first move) and $|X|$ is odd (since \T also has the last move) and $|Z|$ is even. Thus, \T always has an available move in {\normalfont{\texttt{TplayNormal()}}} since $|Y|>0$ at this point. When \T is about to play the last variable $x_i \in Y$ (possibly followed by some $Z$ moves in the second inner loop), all remaining clauses in $\psi$ have width $\le 1$. There cannot be an empty clause in $\psi$, because then $p(\psi)$ would be $ \ge 1/1.5^0 = 1$, contradicting invariant (3). There cannot be more than one clause in $\psi$, because then $p(\psi)$ would be $\ge 2/1.5^{1} \ge 1$. Thus $\psi$ is either empty (already satisfied) or just $(x_i)$ or just $(\overline x_i)$, which \T satisfies in one move.

At termination, $Y$ and $Z$ are empty, and $\psi$ and $\zeta$ are empty and thus satisfied. By invariant (2), this means $\varphi$ is satisfied by the gameplay, so \T wins.

This concludes the proof of \autoref{lem: LTT} except \autoref{fact:zug} and \autoref{fact:zug2}.
\end{proof}

\begin{proof}[Proof of \autoref{fact:zug}] \label{p1}
Since {\normalfont{\texttt{FindZugzwang()}}} returns $(\ell_i, \ell_j)$, the following holds with respect to $\psi, \ell_i, \ell_j$:
\begin{equation} \label{eq:wait} \tag{$\spadesuit$}
\textstyle a+e\ge\frac{5}{4}(b+d)+\frac{1}{2}(c+f+g+h)  
\end{equation}
We also have
\[ \textstyle p(\psi)-p(\psi')=a+e - \bigl ( \frac{5}{4}(b+d)+\frac{1}{2}(c+f+g+h) \bigr ) \] because:
\begin{mylist}{2}
\item Clauses from the $a,e$ groups are removed (since they contain $\ell_i \lor \ell_j$ or $\overline \ell_i \lor \overline \ell_j$), so their potential gets multiplied by $0$. (Intuitively, \T considers these clauses satisfied in advance since she will satisfy $(\ell_i\oplus\ell_j)$ later.)

\item Clauses from the $b,d$ groups each shrink by two literals (since they contain two of $\ell_i, \overline \ell_i, \ell_j, \overline \ell_j$ which are removed), so their potential gets multiplied by $1.5 \cdot 1.5 = 9/4$. (Some of these four literals will eventually get assigned $1$, but since \T cannot predict which ones, she pessimistically assumes they are all $0$.)

\item Clauses from the $c,f,g,h$ groups each shrink by one literal (since they contain one of $\ell_i, \overline \ell_i, \ell_j, \overline \ell_j$ which are removed), so their potential gets multiplied by $1.5 = 3/2$.
\end{mylist}
Since \autoref{eq:wait} holds, $p(\psi) \ge p(\psi')$. 
\end{proof}

\begin{proof}[Proof of \autoref{fact:zug2}] \label{p2}
In \texttt{TplayNormal()}, \T picks the literal $\ell_i$ maximizing $p(\psi,\ell_i)-p(\psi,\overline\ell_i)$ and plays $\ell_i=1$.\footnote{Some other strategies would also work here, but this one is the simplest to analyze.} In \texttt{FplayNormal()}, \F plays $\ell_j=1$. With respect to $\psi, \ell_i, \ell_j$ we have \[ \textstyle p(\psi)-p(\psi')=a+b+c+d+g- \bigl ( \frac{5}{4}e+\frac{1}{2}(f+h) \bigr ) \] because:
\begin{mylist}{2}
\item Clauses from the $a,b,c,d,g$ groups are satisfied and removed (since they contain $\ell_i=1$ or $\ell_j=1$ or both), so their potential gets multiplied by $0$.
\item Clauses from the $e$ group each shrink by two literals (since they contain $\overline\ell_i=0$ and $\overline\ell_j=0$), so their potential gets multiplied by $1.5 \cdot 1.5 = 9/4$.
\item Clauses from the $f,h$ groups each shrink by one literal, so their potential gets multiplied by $1.5 = 3/2$.
\end{mylist}
By the choice of $\ell_i$ (i.e., maximizing  $p(\ell_i)-p(\overline\ell_i)$), we have:
\begin{equation} \label{eq:1} \tag{$\clubsuit$}
\begin{split}
&p(\ell_i)-p(\overline\ell_i) \ge p(\overline\ell_j)-p(\ell_j)\\
&\implies a+b+c-d-e-f \ge b+e+h-a-d-g \\
&\implies 2a +0b + 1c + 0d - 2e - 1f + 1g -1h \ge 0
\end{split}
\end{equation}
Since \texttt{FindZugzwang()} returns NULL, \autoref{eq:wait} does not hold in $\psi$. Thus the following holds:
\begin{equation} \label{eq:2} \tag{$\blacklozenge$}
\begin{split}
&\textstyle (a+e) < \frac{5}{4}(b+d) + \frac{1}{2}(c+f+g+h) \\
& \textstyle \implies -1a + \frac{5}{4}b + \frac{1}{2}c + \frac{5}{4}d - 1e + \frac{1}{2}f + \frac{1}{2}g + \frac{1}{2}h > 0
\end{split}
\end{equation}
Thus $p(\psi)\ge p(\psi')$ because the linear combination $\frac{9}{16}$\autoref{eq:1} $+ \frac{1}{8}$\autoref{eq:2} implies:
\begin{align*}
&\textstyle \frac{9}{16}\bigl(2a +0b + 1c + 0d - 2e - 1f + 1g -1h\bigr) +  \frac{1}{8}\bigl(-1a + \frac{5}{4}b + \frac{1}{2}c + \frac{5}{4}d - 1e + \frac{1}{2}f + \frac{1}{2}g + \frac{1}{2}h\bigr) > 0\\
& \textstyle \implies 1a + \frac{5}{32}b + \frac{5}{8}c +  \frac{5}{32}d - \frac{5}{4}e - \frac{1}{2}f  + \frac{5}{8}g - \frac{1}{2}h > 0\\
&\textstyle \implies 1a + 1b + 1c + 1d - \frac{5}{4}e - \frac{1}{2}f + 1g - \frac{1}{2}h > 0 \\
&\textstyle \implies a+b+c+d+g - \bigl ( \frac{5}{4}e + \frac{1}{2}(f+h) \bigr ) > 0
\end{align*}
\end{proof}

\subsection*{Acknowledgments} 
This work was supported by NSF grants CCF-1657377 and CCF-1942742.


\bibliographystyle{alphaurl}
\bibliography{est}

\end{document}